\newtheorem{definition}{Definition}
\newtheorem{proposition}{Proposition}
\def\sr{\approx}
\def\wr{\sim}
\def\notwr{\nsim}
\def\Else{\mbox{\bf else\ }}
\def\For{\mbox{\bf for\ }}
\def\Foreach{\mbox{\bf foreach\ }}
\def\If{\mbox{\bf if\ }}
\def\algfigure#1{
{
\begin{tabbing}
xxx\=xxx\=xxx\=xxx\=xxx\=xxx\=xxx\=xxx\=xxx\=xxx\= \kill
#1
\end{tabbing}
}
}
\title{Group evolution patterns in running races}
\author{Y. Diez\thanks{Faculty of Science, Yamagata University, Yamagata, Japan} \and M. Fort\thanks{IMAE Department, University of Girona, Girona, Spain} \and M. Korman\thanks{Department of Computer Science, Tufts University, Medford, USA.} \and J.A. Sellar\`{e}s\thanks{LSI, Universitat Polit\`{e}cnica de Catalunya, Barcelona, Spain}}
\date{}
\begin{document}
\maketitle

\begin{abstract}

We address the problem of tracking and detecting interactions between the different groups of runners that form during a race. In athletic races control points are set to monitor the progress of athletes over the course. Intuitively, a {\it group} is a sufficiently large set of athletes that cross a control point together. After adapting an existing definition of group to our setting we go on to study two types of group evolution patterns. The primary focus of this work are {\it evolution patterns}, i.e. the transformation and interaction of groups of athletes between two consecutive control points. We provide an accurate geometric model of the following evolution patterns: survives, appears, disappears, expands, shrinks, merges, splits, coheres and disbands, and present algorithms to efficiently compute these patterns. Next, based on the algorithms introduced for identifying evolution patterns, algorithms to detect {\it long-term patterns} are introduced. These patterns track global properties over several control points: surviving, traceable forward, traceable backward and related forward and backward. Experimental evaluation of the algorithms provided is presented using real and synthetic data. Using the data currently available, our experiments show how our algorithms can provide valuable insight into how running races develop. Moreover, we also show how, even if dense (synthetic) data is considered, our algorithms are also able to process it in real time.

\end{abstract}

{\bf  keywords} Computer science; Information system; Running race analysis; Groups in running races; Evolution patterns; Long-term patterns.

\section{Introduction}

Long-distance races involve a large number of runners, many of whom carry some kind of timing system. Aside from this, every athlete carries a small lightweight chip that records their race bib number and the exact times they cross certain designated control points such as the start line, the finish line and other locations in-between. The athletes themselves also often gather data on their own. Nowadays many of them wear GPS-enabled watches or carry smartphones that record their precise position every few seconds. In this paper we consider using this information to track and detect the interaction between the different groups of runners that form during long-distance races such as 5K, 10K, half and full marathons. To the best of our knowledge, this is the first work to address this problem.

Both types of data are already widely used; control point information is used by race organizers to track in-race athletes, detect possible problems or cheating as well as provide runners with information about their performance in the race. Personal data is used by both professional and amateur runners  during training and provides a very detailed picture of how a run developed. Specifically, these devices provide an example of the type of data that is of interest for runners because apart from recording the GPS position at every instant, data is processed to produce other magnitudes such as average and instantaneous speed, pace (minutes needed to run one kilometer or mile) and even running cadence (number of steps per minute).

One type of data is when the location fixed and the times when runners cross control points are tracked, while the other is when, a wearable device measures the location of the runner at regular intervals. We believe that there will be a natural tendency for the differences between the two types to disappear as the frequency in sampling grows. For the sake of simplicity from here on we will assume that data is measured at control points, but we note that our definitions and algorithms naturally extend to the other case. Only minor modifications which would have no impact at all in the key properties or runtime of the algorithms would be needed.

In any case, as location data for runners increases in popularity what is becoming clear is that such data contain valuable and implicit knowledge. Thus, the goal is to make this knowledge efficiently explicit in an intelligible form. The algorithms and definitions presented in this paper can provide meaningful insight for runners, race organizers and spectators alike. Because the algorithms are fast, they can be used in real-time during a race and/or as a tool for analysis after the race is over.

\subsection{Our contribution}
We study {\it evolution patterns} i.e. the evolution and interaction of groups of athletes that typically happen during races which, in our case, occur at two consecutive control points. From the records of their bib numbers and their running times at specific control points we can identify specific groups of athletes.

For example, when the race starts it is usual for most runners to be together. This forms a large group that will split into several smaller groups of people with similar running speeds. In addition to this split operation we consider other evolution patterns such as: survives, appears, disappears, expands, shrinks, merges, coheres and disbands. For each of these patterns we provide accurate geometric modeling that we use directly in the design of efficient algorithms for detecting them. Our models include the joint evolution of more than one group, thus allowing for insights on group behavior for the whole set of athletes.

Analyzing the behavior of the groups of athletes throughout the race is also interesting. Thus, we study {\it long-term patterns}, i.e. the evolution and interaction of the groups of athletes appearing at more than two consecutive control points. We define four long-term patterns: surviving, traceable forward, traceable backward and related forward or backward. We provide algorithms based on the algorithms introduced for the evolution patterns, to detect the different long-term patterns, to report their length - the number of consecutive control points where they appear - and to determine the largest length sequence of consecutive control points where each long-term pattern appears over the course of the race.

Specifically, in this paper we:

\begin{itemize}

    \item Introduce a formalization for the trajectories of runners in long-distance running races. This formalization builds on existing work in the computational geometry community. The fact that these races have fixed courses allow us to consider the trajectories to be one dimensional. 
    
    \item Present formal definitions of group and evolution patterns. We introduce algorithms to find groups and evolution patterns, all of which are asymptotically optimal.

    \item We study long-term patterns to track information that spans more than two control points. We provide asymptotically optimal algorithms to detect and report these patterns.

    \item We provide an extensive evaluation, using real and synthetic data, that shows how our algorithms perform in practice. This evaluation shows how our algorithms can, in very few seconds, process existing (sparse) data for a marathon with over thirty thousand runners. Since denser data is not yet available, we create a much denser (synthetic) data of roughly 2.5 million runners, and observe that our algorithms can still work in real-time.

    \end{itemize}

The definitions and algorithms presented in this paper can provide meaningful insights for runners, race organizers and spectators alike. The algorithms presented are fast and can therefore be used in real-time during a race as well as an analysis tool once the race is finished.

\subsection{Paper organization}

Section \ref{relatedwork} provides an overview of previous related work. Section \ref{preliminaries} presents the concepts that we use to define group evolution patterns amongst which the {\it inclusion coefficient} is the most important. Section \ref{grouping} contains the formal definition of a group and how we measure the relationship between groups in our setting. The group evolution patterns (survives, appears, disappears, expands, shrinks, merges, splits, coheres and disbands) are defined in Section~\ref{patterns}. Section~\ref{evolutionGraphs} introduces the {\it evolution graphs}. An evolution graph is built using a relation between groups at a pair of consecutive control points defined by a thresholded version of the inclusion coefficient. The evolution graphs are used as a tool to detect any group evolution patterns. Section \ref{algorithms} presents the algorithms for computing group evolution patterns and their asymptotic costs. Next, Section \ref{problems} provides the algorithms used to detect long-term patterns over several control points (surviving, traceable forward, traceable backward and related). These algorithms also compute the length of the patterns detected and the largest length sequence of consecutive control points where each long-term pattern appears. In Section \ref{experiments}, the practical performance of all the algorithms presented is described experimentally with both real and synthetic data. Finally, conclusions are presented in Section \ref{conclusions}.

\section{Related work} \label{relatedwork}

Trajectory databases, in many cases rather large in volume and complex in structure, contain valuable and implicit knowledge that can be extracted using geometric analysis and data mining techniques \cite{BDDM17,LCI09, SHJJ16, Zhe15}.
Extensive research has been done on the problem of detecting sets of entities moving together over a period of time. The main concepts used are: flocks \cite{LI02,GK06,BGHW08,VBT09,FSV14}, moving clusters \cite{KMB05,SNTS06,CMC14}, herds \cite{HCD08}, convoys \cite{JYZJS08}, swarms \cite{LDHK10}, and groups \cite{BBKSS15,KLSW16}.

Although these concepts differ slightly from each other (see definitions below), a major trait they all have in common is whether or not two entities will belong to the same set at some instant, that will depend not only on their current distance, but also on their past and future distances. This is motivated by the fact that the space is two dimensional (or higher), and so even though two entities are nearby, they are possibly traveling along different routes. Hence, for robustness, the requirement that the entities are close for a long period of time is added. This problem, however, does not happen in races since everyone is going in the same direction along a predefined track (i.e., our space is one-dimensional). We take advantage of this fact, and look at each control point independently to determine whether or not two entities are related around that time period.

The fact that the runners move on a predefined course allows us to define efficient algorithms to track all possible events. For example, we can detect that a group has mostly remained the same between two control points. In most of the previous work, even when a single entity leaves a group the whole group is considered to have disappeared (and possibly a new one without the entity has been created). Instead, we allow a small portion of the runners to change from control point to control point, thus giving greater robustness to our algorithms.

Indeed, without our one-dimensional property, this and other problems become extremely difficult, if not impossible, to solve in reasonable time (for example, finding the largest group in which we allow a single runner to temporarily leave the group is NP-hard~\cite{BBKSS15}). 

\paragraph{Other group definitions}
        A flock is a set of entities that move together within a disc of some user-specified size \cite{LI02}.
        Benkert {\it et al.} \cite{BGHW08} proposed a more realistic definition of a flock, where a minimum number of consecutive time-steps are considered, and presented an efficient approximation algorithm for detecting and reporting flocks.
        Using this model, Gudmundsson and van Kreveld \cite{GK06} presented approximation algorithms to find the flock that is together for the longest period.
        Vieira {\it et al.} ~\cite{VBT09} gave a characterization of the potential flocks for every time-step and proposed several heuristic strategies to discover maximal flock patterns with a predefined time duration.
        Fort {\it et al.} \cite{FSV14} studied the problem of finding flock patterns and presented a parallel GPU-based algorithm for reporting all maximal flocks, the largest flock and the flock of maximum duration.

        Kalnis {\it et al.} \cite{KMB05} presented an algorithm for detecting moving clusters, where clustered entities at consecutive time-steps share a large number of common entities, but entities may join and leave during the lifetime of the cluster. In \cite{SNTS06}, a framework for modeling and detecting changes of clusters at different time-steps
        (appears, disappears, survives, splits, absorbs) is proposed.
        T.~L. Coelho da Silva  {\it et al.} \cite{CMC14} developed a method that performs density-based clustering on
        trajectory data at regular time-steps, and analyzed cluster evolution (appear, disappear,
        expand, shrink, split, merge and survive).

        The concept of herd, that relies on the notion of F-score to identify cluster overlaps at consecutive time-steps, was introduced in \cite{HCD08}. Moreover, four types of herd evolvements (expand, join, shrink, and leave) were studied.

        Jeung {\it et al.} in \cite{JYZJS08} proposed the notion of convoy, that uses density connectedness for spatial clustering.
        Aung and Tan \cite{AT10} introduced the notion of evolving convoys to better understand the states of convoys.
        The concept of swarm \cite{LDHK10}, which also uses density connectedness for spatial clustering, permits moving entities to travel together for a number of nonconsecutive time-steps.

        Buchin {\it et al.} in \cite{BBKSS15} introduced a formal definition of group, that relies on three parameters (distance between entities, group duration and group size), and uses the notion of $\varepsilon$-connectedness. They analyzed the mathematical structure of a group, and presented efficient algorithms for computing all maximal groups in a given set of trajectories. M.~van Kreveld {\it et al.}. in \cite{KLSW16} redefined the previous definition and argued that the new definition corresponds better to human intuition, particularly in dense environments. In particular, they provided an algorithm for trajectories moving in $\mathbb{R}^1$ that computes all maximal groups.

        This is an application-driven research area where most contributions are motivated by real problems and algorithms are frequently implemented and experimentally analyzed. These analyses are performed either with synthetic or real data. Both the number of trajectories considered and the number of time-steps that each trajectory has, vary depending on the specific application that motivates the research. For example, \cite{FSV14} reported flock patterns by using hundreds of dense trajectories (tens of thousands of time-steps) corresponding to buses, trucks and human subjects, while \cite{CMC14} used thousands of non-dense (tens of time-steps) to analyze related vehicle data. In terms of group computations, the only previous implementation (to the best of our knowledge) was presented in \cite{BBKSS15} and worked with two datasets. The first contained 400 synthetic trajectories with 818 time-steps each, the second dataset was made up of data from migrating animals and contained 126 trajectories with 1264 time-steps each.

        In terms of the use of GPS data from runners, many commercial tools exist to visualize the course of a run on a map and obtain magnitudes such as average pace or speed. Most of these tools are tied to particular GPS watches or mobile phone apps and seem to be primarily aimed at providing an aid for training journals and, in some cases, providing an element of social network interaction for runners living near each other. A particularly interesting example is The Clusterer project from Strava Labs \cite{Strava}, which,  groups together activities in terms of geographical location, distance completed, activity type (running, walking, biking, etc.) and so on.

\section{Preliminaries}\label{preliminaries}

    In this section, we introduce the {\it inclusion coefficient}, which is a tool used to measure the relation between two (abstract) sets. We then show how we use it to determine relationships between groups of runners.

    Given two sets $(A$ and $B)$ the {\it inclusion coefficient} $I(A,B)$ is the ratio:

    $$I(A,B) = {{|A \cap B|}\over{|A|}}\,.$$

    The inclusion coefficient measures the proportion of elements of $A$ contained in $B$ and has, among others, the following properties:

    \begin{enumerate}[label=\alph*)]
        \item in general $I(A,B) \ne I(B,A)$, i.e. the inclusion coefficient is not symmetric.
        \item $0 \le I(A,B)\le 1$ because $A \cap B \subseteq A$ and is $0 \le |A \cap B| \le |A|$.
        \item $I(A,B)=0 \Leftrightarrow |A \cap B| = 0 \Leftrightarrow A \cap B = \emptyset $.
        \item  $I(A,B)=1 \Leftrightarrow |A \cap B|=|A| \Leftrightarrow A \cap B=A \Leftrightarrow  A \subseteq B $.
        \item  $I(A,B)=1/2 \Leftrightarrow |A \cap B|/|A|=1/2 \Leftrightarrow 2|A \cap B|=|A| \Leftrightarrow |A \cap B|=|A| - |A \cap B|$.
        \item   $I(A,B)>1/2 \Leftrightarrow |A \cap B|/|A|>1/2 \Leftrightarrow 2|A \cap B|>|A| \Leftrightarrow |A \cap B|>|A| - |A \cap B|$.
        \item  $I(A,B)<1/2 \Leftrightarrow |A \cap B|/|A|<1/2 \Leftrightarrow 2|A \cap B|<|A| \Leftrightarrow |A \cap B|<|A| - |A \cap B|$.
    \end{enumerate}

    Next, bearing properties e), f) and g) in mind, we provide some definitions that relate  $|A| - |A \cap B|$, the number of elements of $A$ not contained in $B$, and $|A \cap B|$, the number of elements of $A$ contained in $B$. The goal is to adapt the inclusion coefficient to the cases of interest in our application as well as provide general interpretations of its values that assume an intuitive meaning in practice.

    From now on, we pick a parameter $\mu\in(1/2,1]$ that represents how strict we are in the resemblance between two groups of runners.

    \begin{definition}
        Sets $A$ and $B$ are \textbf{weakly related}, denoted $A \wr B$, if and only if $I(A,B) \ge \mu.$
    \end{definition}

    When $A \wr B$, we say that a sufficiently large part of $A$ is contained in $B$ (or that many elements of $A$ are in $B$). We will also use the notation $A \notwr B$ to denote that $A$ and $B$ are not weakly related.

    \begin{definition}
        Sets $A$ and $B$ are \textbf{strongly related}, denoted $A \sr B$, if and only if $I(A,B) \ge \mu$ and $I(B,A) \ge \mu$.
    \end{definition}

    Consequently, $A \sr B$ if and only if $A \wr B$ and $B \wr A$. Thus, when $A \sr B$, we say that a sufficiently large part of $A$ is contained in $B$ and a sufficiently large part of $B$ is contained in $A$ (or that many elements of $A$ are in $B$ and many elements of $B$ are in $A$).

\subsection{Thresholded inclusion coefficient and inclusion coefficient of unions of disjoint sets}

    Next, we provide some properties of the inclusion coefficient that we will use throughout the remainder of this paper.

    \begin{proposition}
        Let $A$, $B$ and $B'$ be three sets such that $A \wr B$ and $A \wr B'$. Then, it holds that $B\cap B' \neq \emptyset$.
        \label{Proposition:onlyoneset}
    \end{proposition}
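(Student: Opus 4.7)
The plan is a direct pigeonhole-style count inside the set $A$. Since $A\wr B$ and $A\wr B'$, the definition gives $|A\cap B|\ge \mu|A|$ and $|A\cap B'|\ge \mu|A|$. Both of these sets live inside $A$, so I would compare their combined size against $|A|$ itself.

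First I would use the hypothesis $\mu>1/2$ (which comes from the standing assumption $\mu\in(1/2,1]$) to get
\[
|A\cap B|+|A\cap B'|\;\ge\;2\mu|A|\;>\;|A|.
\]
Then I would apply inclusion--exclusion to $(A\cap B)\cup(A\cap B')\subseteq A$:
\[
|A\cap B\cap B'|\;=\;|A\cap B|+|A\cap B'|-|(A\cap B)\cup(A\cap B')|\;\ge\;2\mu|A|-|A|\;=\;(2\mu-1)|A|.
\]
Because $2\mu-1>0$ and $|A|>0$ (the sets in question are nonempty groups of runners, so this is implicit; otherwise the inclusion coefficient is undefined), the right-hand side is strictly positive. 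Hence $A\cap B\cap B'\neq\emptyset$, and a fortiori $B\cap B'\neq\emptyset$.

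No real obstacle arises here; the only subtle point worth flagging is the dependence on the strict inequality $\mu>1/2$, which is exactly what makes the two large subsets of $A$ unable to be disjoint. If the definition admitted $\mu=1/2$ the argument would only yield $|A\cap B\cap B'|\ge 0$ and the conclusion could fail, so I would briefly highlight this to make clear why the range of $\mu$ chosen in the preliminaries is essential to the proposition.
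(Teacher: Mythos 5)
Your proof is correct and rests on the same pigeonhole count as the paper's: two subsets of $A$ each of size greater than $\tfrac{1}{2}|A|$ cannot be disjoint. The paper phrases this contrapositively (if $B'$ were disjoint from $B$ then $I(A,B')<\tfrac{1}{2}<\mu$), whereas you argue directly via inclusion--exclusion and in passing obtain the slightly stronger quantitative conclusion $|A\cap B\cap B'|\ge(2\mu-1)|A|>0$; the substance is the same.
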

    \begin{proof}
        Since $$I(A,B) = \frac{|A \cap B|}{|A|} \ge \mu > \frac{1}{2},$$ then $$ |A \cap B| > \frac{|A|}{2} .$$
        For any set $B'$ disjoint with $B$ we have $$A \cap B' \subseteq A \setminus (A \cap B).$$
        Hence $$|A \cap B'| \le |A \setminus (A \cap B)| = |A| - |A \cap B| < |A| - \frac{|A|}{2} = \frac{|A|}{2}$$ and $$I(A,B') = \frac{|A \cap B'|}{|A|} < \frac{1}{2} < \mu.$$
    \end{proof}

    \begin{proposition}
        Let $A_1, \cdots ,A_k$ be pairwise disjoint sets, $A_i\cap A_j=\emptyset, \forall i,j \in \{1\cdots k\}$, then
        $$I(\cup_{i=1}^{k} A_i,B)=\sum\limits_{i=1}^{k}\frac{|A_i| \, I(A_i,B)}{ \sum_{i=1}^{k} |A_i|}\, .$$
        \label{Proposition:union1}
    \end{proposition}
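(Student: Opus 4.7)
The plan is to unfold the definition of $I$ on the left-hand side, exploit pairwise disjointness to distribute both the union and the cardinalities, and then rewrite each $|A_i \cap B|$ as $|A_i|\,I(A_i,B)$ to recover the weighted-average form on the right.

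More concretely, I would first write
$$I\Bigl(\bigcup_{i=1}^{k} A_i, B\Bigr) = \frac{\bigl|\bigl(\bigcup_{i=1}^{k} A_i\bigr) \cap B\bigr|}{\bigl|\bigcup_{i=1}^{k} A_i\bigr|},$$
and then observe that, because the $A_i$ are pairwise disjoint, so are the sets $A_i \cap B$; hence intersection distributes over the union as a disjoint union, giving $\bigl|\bigl(\bigcup_i A_i\bigr)\cap B\bigr| = \sum_{i=1}^{k} |A_i \cap B|$ and $\bigl|\bigcup_i A_i\bigr| = \sum_{i=1}^{k} |A_i|$.

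Next I would substitute $|A_i \cap B| = |A_i|\,I(A_i,B)$, which is just the definition of $I(A_i,B)$ rearranged (valid whenever $|A_i|>0$; the degenerate case $|A_i|=0$ can either be excluded or handled by noting that such terms contribute $0$ to both sides). Pulling the common denominator $\sum_{i=1}^{k}|A_i|$ out of the sum yields exactly the claimed identity
$$I\Bigl(\bigcup_{i=1}^{k} A_i, B\Bigr) = \sum_{i=1}^{k} \frac{|A_i|\,I(A_i,B)}{\sum_{i=1}^{k}|A_i|}.$$

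There is no real obstacle here: the statement is essentially the fact that the inclusion coefficient of a disjoint union is the $|A_i|$-weighted average of the individual inclusion coefficients. The only point requiring mild care is the disjointness bookkeeping when distributing $\cap B$ over the union, and the trivial corner case of empty $A_i$.
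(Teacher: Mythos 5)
Your proposal is correct and follows essentially the same route as the paper: unfold the definition, distribute $\cap B$ over the disjoint union to convert both cardinalities into sums, and rewrite $|A_i \cap B|$ via the definition of $I(A_i,B)$. The only difference is cosmetic — the paper takes a detour through $I(B,A_i)$ before converting back to $I(A_i,B)$, whereas you substitute $|A_i\cap B| = |A_i|\,I(A_i,B)$ directly, which is slightly cleaner and also sidesteps dividing by $|B|$.
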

    \begin{proof}
        \begin{eqnarray*}
            I(\cup_{i=1}^{k} A_i,B)&=& \frac{|(\cup_{i=1}^{k} A_i) \cap B|}{|\cup_{i=1}^{k} A_i|} = \frac{|\cup_{i=1}^{k} (A_i \cap B) |}{|\cup_{i=1}^{k} A_i|}
        \end{eqnarray*}
        Since the sets $A_i$ are pairwise disjoint,
        \begin{eqnarray*}
            = \frac{\sum_{i=1}^{k} |A_i \cap B|}{\sum_{i=1}^{k} |A_i|} =|B|  \sum_{i=1}^{k} \frac{ I(B,A_i)}{\sum_{i=1}^{k} |A_i|}
        \end{eqnarray*}
        Where the last inequality follows from the fact that $I(B,A_i) = {{|B \cap A_i|}\over{|B|}}$. Finally, we use the fact that $I(B,A_i) = \frac{|A_i|}{|B|} I(A_i,B) $ to obtain
        \begin{eqnarray*}
         = \sum_{i=1}^{k} \frac{ |A_i| I(A_i,B)}{\sum_{i=1}^{k} |A_i|}
        \end{eqnarray*}
        as claimed.
    \end{proof}

    \begin{proposition}
        Let $A_1, \cdots ,A_k$ be $k$ pairwise disjoint sets (that is, $A_i\cap A_j=\emptyset, \forall i,j \in \{1\cdots k\}$). If $A_i \wr B$ $\forall i \in \{1\cdots k\}$, then it holds that $\cup_{i=1}^{k} A_i \wr B$.
        \label{Proposition:union2}
    \end{proposition}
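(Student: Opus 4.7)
The plan is to deduce the result directly from Proposition~\ref{Proposition:union1}, by interpreting the formula it provides as a weighted average of the individual inclusion coefficients.

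First I would invoke Proposition~\ref{Proposition:union1} to rewrite
\[
I\!\left(\bigcup_{i=1}^{k} A_i, B\right) = \sum_{i=1}^{k} \frac{|A_i|\, I(A_i,B)}{\sum_{j=1}^{k} |A_j|}.
\]
The coefficients $w_i := |A_i| / \sum_{j} |A_j|$ are nonnegative and sum to $1$, so the right-hand side is a convex combination of the values $I(A_i,B)$.

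Next I would use the hypothesis $A_i \wr B$, which by definition means $I(A_i,B) \ge \mu$ for every $i$. Substituting the lower bound into each term of the convex combination yields
\[
I\!\left(\bigcup_{i=1}^{k} A_i, B\right) \ge \sum_{i=1}^{k} w_i \, \mu = \mu,
\]
so $\bigcup_{i=1}^{k} A_i \wr B$, as desired. A minor edge case to mention is that the formula of Proposition~\ref{Proposition:union1} requires $\sum_j |A_j| > 0$; but if all $A_i$ are empty the statement is vacuous (and in any case $I$ would be undefined), so we may assume at least one $A_i$ is nonempty.

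There is essentially no real obstacle here: once Proposition~\ref{Proposition:union1} is in hand, the argument is the standard observation that a convex combination of quantities bounded below by $\mu$ is itself bounded below by $\mu$. The only thing worth being careful about is making the disjointness hypothesis explicit, since it is exactly what allows the application of Proposition~\ref{Proposition:union1}.
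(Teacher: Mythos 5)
Your proof is correct and follows essentially the same route as the paper: both apply Proposition~\ref{Proposition:union1} and then lower-bound the resulting weighted average of the $I(A_i,B)$ by $\mu$ using the hypothesis $I(A_i,B)\ge\mu$. Your phrasing in terms of a convex combination, and the remark about the degenerate all-empty case, are harmless refinements of the identical argument.
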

    \begin{proof}
        From Proposition \ref{Proposition:union1}, we have
        $$I(\cup_{i=1}^{k} A_i, B) =\sum\limits_{i=1}^{k}\frac{|A_i| \, I(A_i,B)}{ \sum_{i=1}^{k} |A_i|}
        \ge \sum\limits_{i=1}^{k}\frac{|A_i| \, \mu}{ \sum_{a=1}^{k} |A_i|}= \frac{\mu}{ \sum_{a=1}^{k} |A_i|} \sum\limits_{i=1}^{k}|A_i| = \mu.$$
    \end{proof}

\section{Grouping athletes} \label{grouping}
    Let $\mathcal{E}=\{ e_1, \cdots , e_n \}$ denote the set of $n$ runners and $\mathcal{X}=\{ x_1, \cdots , x_{\sigma} \}$ denote the set of $\sigma$ control points sorted spatially over the course. The race summary of athlete $e_i \in \mathcal{E}$ is represented by the list of pairs in the form of $(x_1,t_{1,i}), \cdots,(x_{\sigma},t_{\sigma,i})$, and  sorted in accordance with the order of the control points, where $t_{j,i}$ denotes the time the athlete $e_i$ passes through the control point $x_j$.  Thus, the times of the athlete recorded at subsequent control points are also sorted: $t_{j,i}<t_{k,i}$ if and only if $j<k$.

    Notice that for each race summary, each of the constituent points contains two real values corresponding to time and space. Although most of the literature introduces the "group" concept as a function of space (that is, a number of athletes that are nearby), in this paper we focus on time instead. This is done basically to conform naturally to the way running data is expressed in races (i.e. we receive the information when the runners reach the designated control points). However, the role of the two parameters is symmetric, and thus it can be modified to the other case easily.
    This represents a variation of the definition of a group of moving entities provided in~\cite{KLSW16}. Consequently, in the following we adapt the definition in the aforementioned reference to match our formalization.

    Let $\varepsilon$ be a predefined threshold. The $\varepsilon$-interval of an athlete $e_i$ at position $x_j$ is the temporal interval $I_{j,i,\varepsilon}=[t_{j,i}-\varepsilon/2, t_{j,i}+\varepsilon/2]$. Two athletes, $e_1$ and $e_2$ are considered to be {\it directly connected} at position $x_j$ if and only if $|t_{j,1}-t_{j,2}|<\varepsilon$ or, equivalently, if and only if $I_{j,1,\varepsilon} \cap I_{j,2,\varepsilon} \ne \emptyset$ (intuitively speaking, the two runners crossed the same control point less than $\varepsilon$ units of time apart from each other).

    Given a subset $S \subseteq \mathcal{E}$ of athletes, {\it two athletes} $e$ and $e'$ are {\it $\varepsilon$-connected in $S$} at position $x_j$ if there is a sequence $e = e_0, \cdots, e_k = e'$ of athletes in $S$ such that, for all $i$, $e_i$ and $e_{i+1}$ are directly connected at $x_j$. A subset $S \subseteq \mathcal{E}$ of athletes is {\it $\varepsilon$-connected in $S$} at position $x_j$ if all athletes in $S$ are pairwise $\varepsilon$-connected in $S$ at position $x_j$. This means that the union of the $\varepsilon$-intervals of athletes in $S$ forms the interval $[t_{j,f}-\varepsilon/2,t_{j,l}+\varepsilon/2]$, where $e_f$ and $e_l$ denote the first and last athletes of $S$ that cross position $x_j$, respectively.

    The set $S$ forms a {\it component} at position $x$ if and only if $S$ is $\varepsilon$-connected in  $S$, and $S$ is maximal with respect to this property. The set of components $C(x)$ at position $x$ forms a partition of the athletes in $\mathcal{E}$ at position $x$.

    A {\it group} at position $x$ is a maximal set $S$ of $\varepsilon$-connected athletes in $S$ that contains a minimum number $m$ of athletes (that is, $|S| \ge m$ and $m$ is some prefixed constant). We denote $G(x)$ the set $\{S_1, \cdots, S_t\}$ of groups at position $x$ and $O(x) = \mathcal{E} - \bigcup_{S_i \in G(x)}$ the set of outliers at position $x$. Observe that at each control point each athlete belongs to at most one group. Consequently, the number of groups $t \leq n/m$.

    \begin{proposition}
        For any group $S_i \in G(x)$ there cannot exist two groups $S'_j,S'_k \in G(x')$ ($S'_j\neq S'_k$) such that $S_i \wr S'_j$, $S_i \wr S'_k$.
        \label{Pro:onlyonegroup}
    \end{proposition}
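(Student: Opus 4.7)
The plan is to derive this as an almost immediate corollary of Proposition~\ref{Proposition:onlyoneset}. Recall that proposition asserts that whenever $A \wr B$ and $A \wr B'$ hold, the sets $B$ and $B'$ must share an element. So the whole argument reduces to verifying that two distinct groups at the same control point $x'$ are disjoint, and then invoking that result to reach a contradiction.

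The first step is to spell out the disjointness of groups at a common control point. By the definition given in Section~\ref{grouping}, the components $C(x')$ form a partition of $\mathcal{E}$, and every group in $G(x')$ is a component with at least $m$ members. Hence any two distinct groups $S'_j, S'_k \in G(x')$ are distinct components of a partition, which gives $S'_j \cap S'_k = \emptyset$.

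The second step is the application itself. Suppose, toward a contradiction, that such $S'_j$ and $S'_k$ exist with $S_i \wr S'_j$ and $S_i \wr S'_k$. Setting $A = S_i$, $B = S'_j$ and $B' = S'_k$ in Proposition~\ref{Proposition:onlyoneset}, we would obtain $S'_j \cap S'_k \neq \emptyset$, directly contradicting the disjointness established in the previous step. Therefore no such pair can exist, which is exactly the claim.

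There is essentially no serious obstacle here; the proposition is a packaging of Proposition~\ref{Proposition:onlyoneset} in the specific context of groups at a single control point. The only care required is to make the disjointness of distinct groups explicit (so that the hypothesis of Proposition~\ref{Proposition:onlyoneset} on the set $B'$ being disjoint from $B$ is actually met), but this follows without any computation from the fact that $C(x')$ is a partition of $\mathcal{E}$ and groups are a subcollection of $C(x')$.
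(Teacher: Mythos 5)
Your proof is correct and follows essentially the same route as the paper: the paper likewise notes that distinct groups at a control point are disjoint and invokes Proposition~\ref{Proposition:onlyoneset} (in fact it reuses the inequality $I(S_i,S'_k)<\tfrac{1}{2}<\mu$ from that proposition's proof, whereas you apply the stated conclusion contrapositively, which is an equally valid packaging of the same argument). No gaps.
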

    \begin{proof}
        Since the sets $S'_j$ and $S'_k$ are disjoint, Proposition \ref{Proposition:onlyoneset} and $S_i \wr S'_j$ imply that $I(S_i,S'_k) < \frac{1}{2} < \mu$, and thus $S_i$ and $S'_k$ cannot be weakly related.
    \end{proof}

    Observe that this proposition is asymmetric and thus, only relates to the weak relation {\it in one direction}. That is, once we have  $S_i \wr S'_j$ we cannot have the same set being related to any other $S'_k$. However, this does not imply anything in terms of weak relations in the opposite direction. In particular, many groups $S_{\ell} \in G(x)$ with $S_{\ell} \wr S'_j$ may exist. Notice also that, this type of behavior does not happen with the strong (bidirectional) relation.

    \begin{proposition}
        For any group $S_i \in G(x)$ there cannot exist two groups $S'_j,S'_k \in G(x')$ such that $S_i \sr S'_j$, $S_i \sr S'_k$, and $S'_j\neq S'_k$. Moreover, for any group $S'_i \in G'(x)$ there cannot exist two groups $S_j,S_k \in G(x)$ such that $S_j \sr S'_i$, $S_k \sr S'_i$, and $S_j\neq S_k$.
        \label{Pro:onlyonegroupBidirecccional}
    \end{proposition}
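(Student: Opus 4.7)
The plan is to derive both assertions directly from Proposition \ref{Pro:onlyonegroup} by exploiting the fact that the strong relation $\sr$ is just the conjunction of the weak relation $\wr$ in both directions. Since Proposition \ref{Pro:onlyonegroup} already forbids a single group from being weakly related to two distinct groups at another control point, and the strong relation implies the weak one, there is essentially nothing new to compute.

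For the first statement, I would assume for contradiction that $S_i \in G(x)$ satisfies $S_i \sr S'_j$ and $S_i \sr S'_k$ for two distinct groups $S'_j,S'_k \in G(x')$. By the definition of strong relation, this gives in particular $S_i \wr S'_j$ and $S_i \wr S'_k$. Since $S'_j$ and $S'_k$ are different groups in $G(x')$, they are disjoint (groups are components of a partition of the relevant athletes), and so Proposition \ref{Pro:onlyonegroup} is immediately applicable and yields a contradiction.

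For the second statement, the argument is symmetric but requires that we read the strong relation in the opposite direction. Given $S_j \sr S'_i$ and $S_k \sr S'_i$, the definition of $\sr$ gives in particular $S'_i \wr S_j$ and $S'_i \wr S_k$. Now $S'_i$ plays the role of the ``single group'' and $S_j,S_k \in G(x)$ are the two disjoint candidates at the other control point, so a second application of Proposition \ref{Pro:onlyonegroup} (with the roles of $x$ and $x'$ swapped) closes the case.

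The main (and only) subtlety is to notice that Proposition \ref{Pro:onlyonegroup} is asymmetric — it only rules out one group being weakly related to two groups at the opposite control point, not the converse — so the two halves of the present statement correspond to the two possible orientations of the weak relation extracted from $\sr$. Once this is observed, no calculation is needed; both halves follow from the fact that $\sr$ implies $\wr$ in each direction, combined with the disjointness of distinct groups at a fixed control point.
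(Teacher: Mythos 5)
Your proof is correct and follows essentially the same route as the paper, which simply notes that the argument is identical to that of Proposition~\ref{Proposition:onlyoneset}: a set with inclusion coefficient at least $\mu>1/2$ in one of two disjoint sets has coefficient below $1/2$ in the other, so a strong (hence weak, in the relevant direction) relation to two distinct, disjoint groups is impossible. Your explicit observation that the two halves of the statement correspond to the two orientations of the weak relation extracted from $\sr$ is exactly the right reading, and no further argument is needed.
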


    Proof of this claim is identical to the proof of Proposition~\ref{Proposition:onlyoneset}. Observe that, although you can only have one strong relation $S_i \sr S'_j$, other groups could be weakly related with either of the two groups. That is,$S_{\ell}\in G(x)$ with $S_{\ell} \wr S'_j$ and $S'_k\in G(x')$ with $S'_k \wr S_i$ may exist.

\section{Group evolution patterns}\label{patterns}

    We now formally define the group evolution patterns. We consider the following patterns: survives, appears, disappears, expands, shrinks, merges, splits, coheres and disbands.

    \begin{figure}
        \begin{center}
        \includegraphics[width=0.8\textwidth]{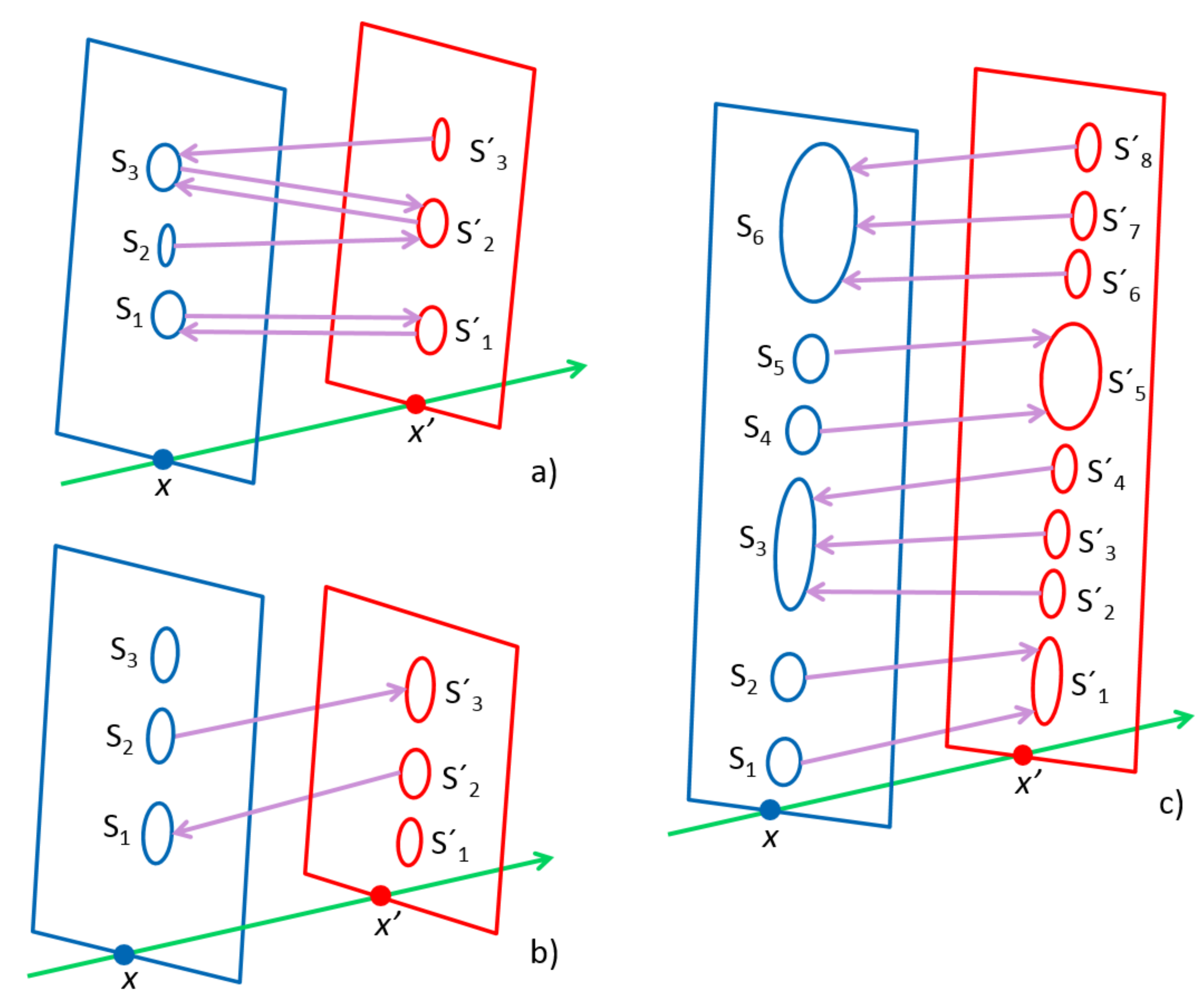}
        \end{center}
        \caption{Examples of all possible group evolution patterns a) group $S_1$ survives as group $S'_1$; group $S_3$ survives as group $S'_2$ and, moreover, group $S'_2$ absorbs group $S_2$ and group $S_3$ spawns group $S'_3$;
        b) group $S'_1$ appears and group $S_3$ disappears; group $S_2$ expands to group $S'_3$ and group $S_1$ shrinks into group $S'_2$.
        c) groups $S_1$ and $S_2$ merge into group $S'_1$ and group $S_3$ splits into groups $S'_2$, $S'_3$ and $S'_4$;
        groups $S_4$ and $S_5$ cohere into group $S'_5$ and group $S_6$ disbands into groups $S'_6$, $S'_7$ and $S'_8$.}
        \label{fig:EvolGraph1}
    \end{figure}

    \begin{description}

        \item[{\bf Survives}] Intuitively speaking, a group at a control point {\it Survives} as a group at the next control point when the two groups differ only by a few athletes. 
            Additionally, a group that {\it survives} may absorb one or more groups and/or spawn one or more groups. Notice that these two latter behaviors might happen simultaneously.

            Formally, group $S_i \in G(x)$ {\it survives}  as group $S'_j \in G(x')$, if and only if $S_i \sr S'_j$
            (see Figure \ref{fig:EvolGraph1} a).

            We say that a group $S_i$ that survives as group $S'_j \in G(x')$ also {\it absorbs}  groups  $S_{i_1}, \cdots, S_{i_t} \in G(x)$, different from $S_i$, if and only if $S_{i_k} \wr S'_j$, $1 \le k \le t$. Thus, even though $S_i$ and $S'_j$ are strongly related (intuitively speaking, they are the same group except for a few runners that changed) the group $S'_j$ incorporates most of the runners of the smaller groups $S_{i_k}$.

            Conversely, we say that a group $S_i$ that survives as group $S'_j \in G(x')$ {\it spawns} one or more groups $S'_{j_1}, \cdots, S'_{j_r} \in G(x')$, different from $S'_j$, if and only if $S_{j_{\ell}} \wr S_i$, $1 \le \ell \le r$. See an example of a group that survives (with spawn and absorb) in Figure \ref{fig:EvolGraph1} a).

        \item[{\bf Appears}] A group {\it Appears} at control point $x'$ when it has no previous relationship with groups of the previous control point $x$. Formally, group $S'_j \in G(x')$ {\it appears}, if and only if $\forall S_i \in G(x)$ it holds that $S_i \notwr S'_j$ and $S'_j \notwr S_i$ (see Figure \ref{fig:EvolGraph1} b)).

        \item[{\bf Disappears}] The opposite of appears, we say that a group {\it Disappears} at control point $x$ when it has no relationship with the groups at the next control point $x'$ (see Figure \ref{fig:EvolGraph1} b)). Formally, group $S_i \in G(x)$ {\it disappears} , if and only if $\forall S'_j \in G(x')$ it holds that $S_i \notwr S'_j$ and $S'_j \notwr S_i$.

        \item[{\bf Expands}] A group at control point $x$ {\it Expands} to the next control point when it has grown so much that many athletes in the new group did not belong to the original group (despite the new group containing most of the athletes from the original group, see Figure \ref{fig:EvolGraph1} b)). Formally, group $S_i \in G(x)$ {\it expands} into group $S'_j \in G(x')$ if and only if:
            \begin{enumerate}
                \item $S_i \wr S'_j$.
                \item $S_k \notwr S'_j$, $\forall S_k \in G(x) \setminus \{S_i\}$.
                \item $S'_j \notwr S_i$.
            \end{enumerate}

        \item[{\bf Shrinks}] This is the reverse behavior in which only a portion of a large group continues at the next control point (see Figure \ref{fig:EvolGraph1} b)). Formally, we say that group $S_i \in G(x)$ {\it Shrinks}  into group $S'_j \in G(x')$ if and only if:

            \begin{enumerate}
                \item $S'_j \wr S_i$.
                \item $S'_k \notwr S_i$, $\forall S'_k \in G(x') \setminus \{S'_j\}$.
                \item $S_i \notwr S'_j$.
            \end{enumerate}

        \item[{\bf Merges}] Two or more groups at a control point $x$ {\it Merge} into a single group at the next control point $x'$ if many of the athletes in each group in $x$ belong to the group in $x'$ (and this large group in $x'$ shares many athletes with the union of the smaller groups in $x$, see Figure \ref{fig:EvolGraph1} c)). Formally, a sequence $S_{i_1}, \cdots, S_{i_k} \in G(x)$ of groups (for $k \ge 2$) are {\it merged} into a single group $S'_j \in G(x')$ if and only if:
            \begin{enumerate}
                \item $S_{i_a} \wr S'_j$, $\forall 1\le a \le k$.
                \item $S_r \notwr S'_j$, $\forall S_r \in G(x) \setminus \{ S_{i_1}, \cdots, S_{i_k} \}$.
                \item $S'_j \wr \cup_{a=1}^{k} S_{i_a}$
            \end{enumerate}

            Note that condition (1) and Proposition \ref{Proposition:union2} imply that $\cup_{a=1}^{k} S_{i_a} \wr S'_j$. Together with condition (3) this implies
            $S'_j \sr \cup_{a=1}^{k} S_{i_a}$.

            Since the groups at a control point are disjoint, Proposition~\ref{Proposition:onlyoneset} also implies that $S'_j\notwr S_r$, $\forall S_r \in G(x) \setminus \{ S_{i_1}, \cdots, S_{i_k} \}$. Consequently, two or more groups {\it merge} into a single group if the merged group is strongly related to their union and the merged group is not weakly related to any other group.

        \item[{\bf Splits}] {\it Splits} is the reciprocal of {\it Merges}. Formally, a group $S_i \in G(x)$ {\it splits} (see Figure \ref{fig:EvolGraph1} c)) into two or more groups $S'_{j_1}, \cdots, S'_{j_k} \in G(x')$, $k \ge 2$, if and only if:

            \begin{enumerate}
                \item $S'_{j_a} \wr S_i$, $\forall 1\le a \le k$.
                \item $S'_r \notwr S_i$, $\forall S'_r \in G(x') \setminus \{ S'_{j_1}, \cdots, S'_{j_k} \}$.
                \item $S_i \wr \cup_{a=1}^{k} S'_{j_a}$
            \end{enumerate}

            As with {\it Merges}, we must have $S_i \sr \cup_{a=1}^{k} S'_{j_a}$ and $S_i\notwr S'_{\ell}$, $\forall S'_{\ell} \in G(x') \setminus \{ S'_{j_1}, \cdots, S'_{j_k} \}$.

        \item[{\bf Coheres}] This happens when a {\it merge} and an {\it expand} happen at the same time. That is, whenever groups merge but the resulting group is not related to any of the original group (or the union of all groups, see Figure \ref{fig:EvolGraph1} c)). Formally, two or more groups $S_{i_1}, \cdots, S_{i_k} \in G(x)$, $k \ge 2$, {\it cohere} into a single group $S'_j \in G(x')$, if and only if:

            \begin{enumerate}
                \item $S_{i_a} \wr S'_j$, $\forall 1\le a \le k$.
                \item $S_r \notwr S'_j$, $\forall S_r \in G(x) \setminus \{ S_{i_1}, \cdots, S_{i_k} \}$.
                \item $S'_j \notwr \cup_{a=1}^{k} S_{i_a}$.
            \end{enumerate}

            Note that the first condition and Proposition \ref{Proposition:union2} imply that $\cup_{a=1}^{k} S_{i_a} \wr S'_j$. However, in this case the two groups are not connected because of the third condition.

            From (3.) we have $I(S'_j, \cup_{a=1}^{k} S_{i_a}) < \mu$, thus $I(S'_j, S_{i_a}) < I(S'_j, \cup_{a=1}^{k} S_{i_a}) < \mu$, and consequently:

            $\forall S_{i_a}, \,\, 1\le a \le k, \,\, S'_j \notwr S_{i_a}$.

        \item[{\bf Disbands}] The inverse of {\it Cohere}. In this case a single group {\it Disbands} into smaller groups at next control point, but still there is no relationship between the union of the small groups and the original group (see Figure \ref{fig:EvolGraph1} c)). Formally, a group $S_i \in G(x)$ {\it disbands} into groups $S'_{j_1}, \cdots, S'_{j_k} \in G(x')$, $k \ge 2$ if and only if:

            \begin{enumerate}
                \item $S'_{j_a} \wr S_i$, $\forall 1\le a \le k$.
                \item $S'_r \notwr S_i$, $\forall S'_r \in G(x') \setminus \{ S'_{j_1}, \cdots, S'_{j_k} \}$.
                \item $S_i \notwr \cup_{a=1}^{k} S'_{j_a}$.
                \end{enumerate}

            As with {\it Coheres}, we have $\cup_{a=1}^{k} S'_{j_a} \wr S_i$, but this is a weak relationship.
    \end{description}

    Note that, we introduced these patterns in a way that they are mutually exclusive: the groups involved in a pattern between two control points $x < x'$ cannot participate in any other pattern at the same two control points. This will facilitate the design of Algorithm~2 for detecting the different behaviors.

\section{Evolution graphs} \label{evolutionGraphs}

    Next, we introduce the evolution graphs, we will use as a tool for analyzing the evolution patterns of groups of athletes during a race. Along the paper we will use standard concepts of graph theory. See~\cite{Harary} for more details on these concepts.

    For every pair of consecutive control points $x$ and $x'$, $x<x'$, we consider a weighted directed bipartite graph $\mathcal{B}(x,x')$, which we call the {\it evolution graph} of $x$ and $x'$, defined as follows. The two sets of vertices are the groups $G(x)$ and $G(x')$, respectively. We add a directed {\it forward edge} from vertex $S_i \in G(x)$ to vertex $S'_j \in G(x')$ if and only if $S_i \wr S'_j$. Similarly, we add a directed {\it backward edge} from vertex $S'_j \in G(x')$ to vertex $S_i \in G(x)$ if and only if $S'_j \wr S_i$. Hence, there will be both forward and backward edges to and from two vertices $S_i \in G(x)$ and $S'_j \in G(x')$ if and only if $S_i \sr S'_j$.   Because of Propositions \ref{Pro:onlyonegroup} and \ref{Pro:onlyonegroupBidirecccional}, the out-degree of any vertex in the graph $\mathcal{B}(x,x')$ is at most one, while its in-degree can be larger than one. Observe that isolated vertices can also exist. The weight associated to the edge connecting $S$ and $S'$, independently of being a forward or backward edge, is defined as $|S_i \cap S'_j|$.

    An example of an evolution graph is shown in Figure \ref{fig:EvolGraph}. The groups representing the vertices at control points $x$, $x'$  are sorted from top to bottom in the order in which they were recorded at the control points. The size of each group is proportional to the number of runners that form the group.

    \begin{figure}[!htp]
        \begin{center}
        \includegraphics[width=0.45\textwidth]{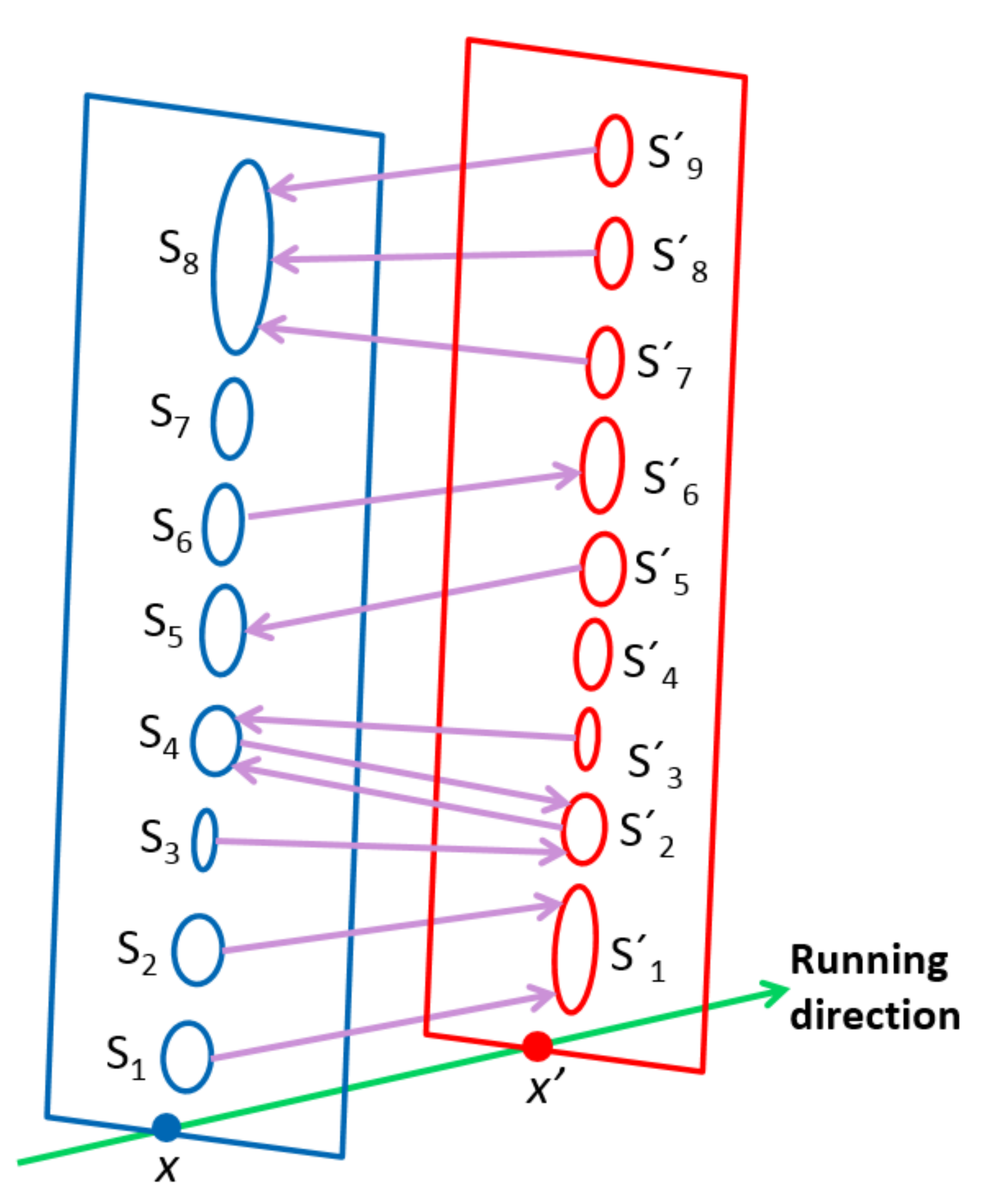}
        \end{center}
        \caption{Example of an evolution graph for two consecutive control points $x < x'$.}%
        \label{fig:EvolGraph}
    \end{figure}

    The evolution graph encapsulates the information on weak and strong relations between groups from two consecutive control points, thus it can be used to analyze the evolution patterns of these groups (see Figure \ref{fig:EvolGraph1}).

\section{Description of the algorithms} \label{algorithms}

    We now focus on the computational aspects of the group evolution patterns described in Section \ref{patterns}. First, in Section \ref{evolGraphComput}, we show how to compute the groups at each control point and the evolution graph between two consecutive control points. Then, in Section \ref{groupPatternComput} we discuss how to extract the group evolution patterns from the evolution graph.

\subsection{Group and evolution graph computation} \label{evolGraphComput}

    As is common practice in athletic races, we consider the identifiers of athletes to be known in advance as well as the positions of all control points. We consider the events (i.e. an athlete reaches a particular control point) to be sorted in terms of the time when they happen. Notice that consecutive events (in time) are likely to happen at different control points. This is given by the motivational problem, where the system that controls the race processes events in the order that they happen. Nevertheless, this could be used with generalized data by simply sorting the data by time first, thus "simulating" this kind of input.

    Our algorithm processes the events in increasing time order. For every event, the control point it corresponds to is examined. If the athlete causing the event is the first to reach the control point, a component is {\it started}. Otherwise, we check whether more than $\varepsilon$ seconds have elapsed from when the last athlete went through that control point. If not, the athlete is added to the active component, otherwise, the active component in that control point is decreed {\it finished} and a new one is started (this new one will contain the newly arrived athlete). If the number of athletes in the finished component is greater than or equal to the group threshold, the component is stored as a group and assigned a unique identifier within its control point. For every athlete $e_j$ we keep, at all times, a doubly connected list {\it groups($e_j$)} that stores the identifiers of the groups $e_j$ belonged to at the control points that the athlete has already run through (or a code value $\emptyset$ they did not belong to a group at one or more control points).

As the components are detected, a bipartite graph $\mathcal{P}(x,x')$ is built for every pair of consecutive control points $x$ and $x'$, and its nodes are the already detected groups and the active component of each control point. Its edges join two nodes whenever their groups have no empty intersection and the cardinality of the intersection weights the edge. This graph is the precursor to the evolution graph $\mathcal{B}(x,x')$ and is called the {\it precursor graph}.  Basically, every time a component is decreed finished, the precursor graph is updated. If the component does not define a group, its corresponding node, together with the edges it may have, are deleted from the precursor graph. When a group $S'_j \in G(x')$ is detected, the (undirected) edges between $S'_j$ and all the groups in $G(x)$ with no empty intersection with $S'_j$ are added in $\mathcal{P}(x,x')$. In this case, once the precursor graph has been updated, then the evolution graph is also updated recording the existent (forwards/backwards) relations between the new group and the existing groups.  For further details see Algorithm~1 and the description below.

{\small
    \algfigure{
    \\
    {\sc Algorithm 1: \bf  {Compute Groups and Evolution Graphs} } \\
    \> \For {all events $(x_j,t_{j,i})$ in input (athlete $i$ runs through control point $x_j$ at time $t_{j,i}$)  }\\
    \> \> {\em \{Let $x=x_{j-1}$, $x'=x_j$ and $x''=x_{j+1}$ \} }\\
    \> \> {\em \{Consider $\mathcal{C}$, the active component of the control point $x'$ \} }\\
    \> \> \If { $\mathcal{C}$ does not exist } \\
    \> \> \> {\em \{Create $\mathcal{C}$, the active component, with list: $l=\{e_{i}\}$ \} } \\
    \> \> \Else \\
        \> \> \> {\em \{Retrieve the last event in $\mathcal{C}$ $\rightarrow$ $(x',t_{j,k})$ \} } \\
        \> \> \> \If { $|t_{j,i}-t_{j,k}|<\varepsilon$} \\
        \> \> \> \> \{(same component) $\rightarrow$  Add athlete $e_i$ to $l$ the list of athletes of $\mathcal{C}$\} \\
        \> \> \> \Else \{Finish the component $\mathcal{C}$ \}  \\
        \> \> \> \> {\em \{Retrieve the list of athletes in $\mathcal{C}$: $l=\{e_{i_1},...e_{i_r}\}$ \} } \\
        \> \> \> \> \If { $|\mathcal{C}|<Group\_Threshold$} \{ This component was not a group  \} \\
        \> \> \> \> \> \Foreach {$e_{i_r} \in l$  } Do \\
        \> \> \> \> \> \> groups($e_{i_r}$) $\leftarrow$ $\emptyset$ \{ For every athlete push back new $\emptyset$ information \}\\
        \> \> \> \> \> {\em \{ \bf {Call Function:} $Delete\_Tentative\_Edges(\mathcal{P}(x_{j},x_{j+1}),\mathcal{C})$ \} } \\
        \> \> \> \> \Else \{New group $S'$ detected $\rightarrow$ Assign unique group identifier $Id$ \} \\
        \> \> \> \> \> \Foreach {$e_{i_r} \in l$  } Do \\
        \> \> \> \> \> \> groups($e_{i_r}$) $\leftarrow$ $Id$  \{ For every athlete push back new group information \}\\
        \> \> \> \> \> {\em \{ \bf {Call Function:} $Update\_Precursor\_Graph(\mathcal{P}(x,x'),l)$ \} } \\
        \> \> \> \> \> {\em \{ \bf {Call Function:} $Update\_Evolution\_Graph(\mathcal{B}(x,x'),\mathcal{B}(x',x''),\mathcal{P}(x,x'), \mathcal{P}(x',x''),\mathcal{C})$ \} } \\
        \> \> \> \> {\em \{Create a new active component $\mathcal{C}$ with list: $l=\{e_{i}\}$ \} } \\
    \\
    }
}

    The function $Delete\_Tentative\_Edges$ is called when the active component $\mathcal{C}$ of $x'$ is decreed finished but it does not have enough entities to be considered a group. Since $\mathcal{C}$ is not a group it will not be stored and the component itself and the tentative edges incident to it are deleted from $\mathcal{P}(x',x'')$.

    The function $Update\_Precursor\_Graph$ basically considers all the athletes in the newly appearing group $S'$ at the control point $x'$ (stored in a list $l$), it looks at the group they belonged to back at the previous control point $x$ and updates the precursor graph $\mathcal{P}(x,x')$. This is done by adding an edge of weight one between the two groups if the edge does not exist, or incrementing its weight in one, otherwise. The athletes in $l$ that still have no group information for control point $x$ are athletes contained in the active (not yet finished) component of $x$ and thus they count in the tentative existent edge of $\mathcal{P}(x,x')$. Consequently, the function traverses the list of athletes $l$ and accumulates the number of entities shared between the newly appearing group and every group in $x$ that shares at least one athlete with it.

    Finally, the function $Update\_Evolution\_Graph$ is called when $\mathcal{C}$ has become a group $S'_j\in G(x')$. For each edge of $\mathcal{P}(x,x')$ and $\mathcal{P}(x',x'')$ between groups $S'_j$ and $S_i \in G(x)\cup G(x'')$  whether $S'_j \wr S_i$ or $S_i \wr S'_j$ is checked. If a relation holds, the corresponding forward or backward edge is added in the evolution graph.

\subsubsection{Complexity}

Let $n$ denote the number of athletes and $\delta$ the number of control points. Let us also keep in mind that the group threshold $m$ is a small constant value. Since each athlete passes through each control point at most once, the total number of events is $O(n\delta)$. The key to having a fast algorithm is that most operations introduced per event take constant time. Additionally, the few operations that need more than constant time per runner are amortized over the runners.

When runner $i$ passes through the $j$-th control point event $(x_j,t_{j,i})$ is generated. At this point, we retrieve the current active component $\mathcal{C}$ and determine the last already processed event $(x_j,t_{j,k})$ of that control point. This can be done in $O(1)$ time with a simple pointer to the last event. If the time difference between the two events is smaller than $\varepsilon$ then we simply add the new runner to the active component.

   A more interesting case happens when $t_{j,i}$ and $t_{j,k}$ more than $\varepsilon$ units apart. Whenever this happens, the current component $\mathcal{C}$ associated to runners $\{e_{1},...e_{r}\}$ is completed. We create a new component $\mathcal{C}'$ whose (for now) only runner is $i$. In addition, we check is the size of $\mathcal{C}$: if it contains less runners than the group threshold $m$, then it does not define a group. In such a case we note down that none of the runners associated to $\mathcal{C}$ belong to a group and we are done. Otherwise, we have found a new group and must find any possible relationship with previously existing groups.

In order to determine if two groups are related we need the cardinality of their intersection. Thus, for each runner in $\mathcal{C}$, we check to what group (if to any at all) it belonged to in the previous control point. This information can be accessed in constant time with the runner's id. By adding this information over all runners $\{e_{1},...e_{r}\}$ we can determine (in $O(r)$ time) the cardinality of the intersection of $\mathcal{C}$ with each of the groups of the previous control point. Once the cardinality of each of these intersections is known, we can determine any weak relationship $\mathcal{C}$ may have (and add those edges to $\mathcal{P}(x,x')$).

In essence, a runner is accessed at most twice for each control point during the processing of events. First when the runner passes through the control point and second one when the component to which the runner belongs to is completed. In both events the runner only causes a constant number of operations, thus we conclude that the total time spent in the whole execution along all control points is $O(n\delta)$.%

\subsection{Computing group evolution patterns}  \label{groupPatternComput}

    The group evolution patterns are computed with the help of the evolution graph $\mathcal{B}(x,x')$. With the graph in hand, we can find the patterns by querying the evolution graph $\mathcal{B}(x,x')$ and check when are the conditions of each pattern met. We can report patterns on-the-fly (analyze $\mathcal{B}(x,x')$ after we detect a new group $S'\in G(x')$ or $S\in G(x)$), or once all events have been processed (i.e., when the race is over).

    Reporting the patterns as soon  as they are detected implies reporting transitory patterns, some of which change over time. %
       This is not a limitation of the algorithm, but rather a result of having only partial information available until all the athletes who were originally part of the group have run through the next control point. This partial information may cause a {\it Split} pattern to report several {\it Appear} patterns before detecting the {\it Split} itself. For instance, if group $S$ splits into $S'_{i_1}, \ldots, S'_{i_k}$, and several groups $S'_{i_a}\in G(x')$ finish before $S$ is finished, every such $S'_i$ would be detected as  {\it Appears} until $S$ is finished. Something similar may happen with {\it Shrinks} and {\it Survives}, {\it Disbands} and {\it Splits} or {\it Coheres} and {\it Merges}, among others. To detect only real patterns one should first wait until $\mathcal{B}(x,x')$ is completed and all the information is available.

The queries needed to detect the patterns follow directly from their definition and are formulated mainly by using the forward-in-degree of the groups $S'\in G(x')$, $f_{in}(S')$, and the backward-in-degree of $S \in G(x)$, $b_{in}(S)$, of $\mathcal{B}(x,x')$. Sometimes the forward-out and backward-out degrees of the respective groups are also needed $f_{out}(S)$ and $b_{out}(S')$ as it can be seen in Algorithm~2. Algorithm~2 has as input $S'\in G(x')$ and $\mathcal{B}(x,x')$, another equivalent algorithm with input $S\in G(x)$ and $\mathcal{B}(x,x')$ is needed to detect the patterns on-the-fly, but, because it is very similar to Algorithm~2, is not presented here.

{\small
      \algfigure{
        \\
        {\sc Algorithm 2: \bf  {Detect patterns} }\\
        \> INPUT:  $S' \in G(x')$, $\mathcal{B}(x,x')$\\
        \> \> \If $f_{in}(S') =0$  \quad \{ $\nexists S \in G(x) \; | \; S \wr S' $\} \\
        \> \> \> \If $b_{out}(S') = 0$ \, \quad \{ $\nexists S \in G(x) \; | \; S'\wr S$ \} \, $\rightarrow$ Report \it{Appears}   \\
        \> \> \> \Else $\{ \, b_{out}(S') = 1$,  $\exists S\in G(x) \, | \, S' \wr S\, \}$\\
        \> \> \> \> \If $b_{in}(S) = 1$ \, $\rightarrow$  Report {\it Shrinks}\\
        \> \> \> \> \Else  \{ $b_{in}(S) = k>1$, consider $S'_{i_a} \in G(x')$ with $1\le a \le k \, | \, S'_{i_a} \wr S$ \} \}\\
        \> \> \> \> \> \If   $S \wr \cup_{a=1}^{k} S'_{i_a}$ $\rightarrow$    Report {\it Splits}\\
        \> \> \> \> \> \Else   $\rightarrow$   Report {\it Disbands}\\
        \> \> \Else \If $b_{out}(S')=1$ and $S\wr S'$ \{ where  $S\in G(x)$ holds $| \, S'\wr S$ \} $\rightarrow$ Report {\it Survives} \\
        \> \> \> \If $f_{in}(S')>1$  $\rightarrow$ Report {\it Absorbs} \\
        \> \> \> \Else \If $b_{in}(S)>1$  $\rightarrow$ Report {\it Spawns} \\
        \> \> \Else  \If { $f_{in}(S')=1$  } \, \{$\exists ! \, S \in G(x) \, | \, S \wr S' $ \} \, $\rightarrow$ Report \it{Expands}  \\
        \> \> \Else \{ $f_{in}(S') = k>1$, consider $S_{i_a} \in G(x)$ with $1\le a \le k \, | \, S_{i_a} \wr S'$ \}\\
        \> \> \> \If { $S' \wr \cup_{a=1}^{k} S_{i_a}$ }  $\rightarrow$ Report \it{Merges}  \\
        \> \> \> \Else $\rightarrow$ Report \it{Coheres}\\
        \\
    }\label{alg2}
}
    Algorithm~2 does not detect {\it Disappear} patterns because they can only be detected by analyzing the groups $S\in G(x)$ which are, in fact, the groups that disappear. {\it Disappears} is equivalent to the {\it Appears} pattern but when considering $S\in G(x)$ instead of $S'\in G(x')$, i.e. a group $S\in G(x)$ {\it disappears} whenever $f_{out}(S)=0$ and $b_{in}(S)=0$.\\

    If we detect patterns on-the-fly, after detecting a new group $S'\in G(x')$ we have to analyze the updated edges of the evolution graph $\mathcal{B}(x,x')$ using Algorithm~2, and $\mathcal{B}(x',x'')$ with $x''>x'$, the next control point, by using the analog algorithm. On the other hand, if this has been done once the evolution graph $\mathcal{B}(x,x')$ is completed, we can use Algorithm~2 to detect all the patterns (except for {\it Disappears}) and analyze the groups $S\in G(x)$ by only checking the {\it Disappears} pattern. To detect that $\mathcal{B}(x,x')$ is complete, we need to know that all the athletes who have not left the race have run through control points $x$ and $x'$. This can be achieved by using a "broom wagon". In fact, detecting if all the athletes have run through a control point is also needed to decree the last component of each control point finished.

\subsubsection{Complexity}

    From Algorithm~2, we conclude that detecting the patterns for {\it Appears} and {\it Disappears} requires simple in-degree/out-degree checks that take $O(1)$ time. The patterns for {\it Survives}, {\it Expands} and {\it Shrinks} require checking the cardinality of the intersection of the two groups involved in one single relation. Recall that the size of the intersection is computed when the second group is completed and is stored in the edge between them (i.e. for $S\in G(x)$ and $S'\in G'(x)$ such that $S\wr S'$ or $S'\wr S$ we store $|S\cap S'|$ in the edge). Thus, to determine if any of these events occurs, it suffices to do a constant number of checks. Detecting {\it Splits}, {\it Disbands}, {\it Merges} and {\it Coheres} is similar. The only difference is that we need to take into account all incoming or outgoing edges (and cardinalities of intersections) adjacent in $\mathcal{B}(x,x')$ to a group (i.e. for $S\in G(x)$ we store $|S\cap_{a=1}^{k}S'_{i_a}|$ whenever $S\wr S'_{i_a}$, and for $S'\in G(x')$ the value $|S'\cap_{a=1}^{k}S_{i_a}|$ whenever $S'\wr S_{i_a}$).

  Detecting the patterns appearing in $\mathcal{B}(x,x')$ takes $O(1)$ time per group in $G(x) \cup G(x')$. As all the information that we need to check is available in constant time, the total time we spend looking for all of the above patterns involving one particular group is proportional to its degree. Since the total degree in the evolution graph is bounded by the number of events as there exist at most $\frac{2n}{m}$ groups, determining the group evolution patterns appearing between two time steps take $O(\frac{n}{m})$, and all the evolution patterns $O(\frac{n\delta}{m})$. Bearing in mind that the group threshold $m$ is a small constant value we conclude that these calculations do not increase the time complexity of the algorithm to obtain the evolution graph. Thus, all the groups, evolution graphs and group evolution patterns that appear when the race is analyzed, can be obtained in $O(n\delta)$ time.

\section{Long-term patterns}\label{problems}
    The evolution graphs allow us to characterize and compute the evolution of the groups between two consecutive control points, and detect the evolution patterns of the existing groups. In this section, we aim to acquire and elaborate the information about the groups affecting a larger number of consecutive control points.

    To combine the information extracted from each weighted directed bipartite evolution graph $\mathcal{B}(x_{i},x_{i+1})$ with $1\le i < \delta$, we consider a graph that is obtained as the concatenation of all such graphs. Specifically, this directed graph is defined as $\mathcal{R}=(\mathcal{V},\mathcal{E})$ with   $\mathcal{V} = \bigcup_{i \in \{1\cdots \delta \}} \bigcup_{S \in G(x_i) } S$  and  $\mathcal{E}$ defined as the union of all the edges in the directed bipartite evolution graphs $\mathcal{B}(x_{i},x_{i+1})$. Consequently, the vertices in $\mathcal{R}$ correspond to the groups at all control points, and the edges between them, and the (directed) weak relations between the groups. We call this graph the {\it global graph}.

    The directed graph $\mathcal{R}$ allows us to introduce some problems related to the long-term patterns of the groups over several consecutive control points. We define four different long-term group patterns: surviving, traceable forward, traceable backward and related forward or backward. We are interested in determining the {\it length}, i.e. the number of consecutive control points where a specific long-term group  pattern appears - and also the largest length for which each pattern occurs. We would like to remark that equivalent problems associated to any of the (short-term) evolution patterns could be considered.

    \begin{enumerate}
        \item {\it Surviving}.  A group survives between two consecutive time-steps (as groups $S\in G({x_i})$, $S'\in G({x_{i+1}})$ if both a forward edge from $S$ to $S'$ and a backward edge from $S'$ to $S$ are present and, consequently, a strong relation is detected in the graph. Informally, we can say that for each group, $S\in \mathcal{V}$, we are interested in determining the number of consecutive control points along which that group survives. Formally it stands for finding the length of the longest path that can be traversed in $\mathcal{R}$ using both forward and backward edges that contain $S$. In graph theoretical terms, we are looking for the length of the longest {\it strongly connected component} of $\mathcal{R}$ containing $S$. %

        \item {\it Traceable forward}. We define the length of the traceable forward pattern of a group as the number of consecutive control points that a group either {\it Survives, Expands, Merges} or {\it Coheres}. In graph theoretical terms, we look for the length of the longest path in $\mathcal{R}$ that contains a group $S$ and that uses only forward edges.

            This allows us to detect behaviors such as a large group staying together for a long period and towards the end splitting into two smaller groups, or one of the small groups disappearing, but the other continuing for the rest of the race. The surviving long-term pattern of such a group would end when the large group splits into two, but it would make sense to say that a long-term pattern lasted the whole race thanks to the small group that has lasted the whole race. Hence it makes sense to look for the length of the traceable forward relation.

          \item {\it Traceable backward}. This is equivalent to the traceable forward long-term pattern, but allows the groups connected via {\it Shrinks, Splits} or {\it Disbands} to be detected. This pattern analyzes the length of the longest paths in $\mathcal{R}$ that use only backward edges.

        \item {\it Related forward or backward}. As a combination of the traceable forward and the backward relations, we now look for paths of groups that have maintained some relationship for the greatest number of control points during the race. This connection can be either a weak or strong relation, regardless of the direction. In graph theoretical terms, we virtually transform $\mathcal{R}$ into an undirected graph and look for the length of the largest connected component containing a vertex.
    \end{enumerate}

\subsection{Long-term pattern computation}

    The length of the pattern is computed by making a sweep of the global graph. For each group we use four helpful integers $lpS$, $lpF$, $lpB$, and $lpR$. Each of these integers is associated to one of the four long-term patterns described in Section~\ref{problems}. Initially, the four variables are set to zero for all groups.

    For simplicity, we compute long-term patterns once the race has finished and thus no changes happen on the global graph. However, we emphasize that this algorithm can be adapted to a dynamic situation in which data comes in real-time. It suffices to update the values of these four variables as the edges are added one by one.

    We start by describing how to compute the length of the surviving relation. Intuitively speaking after the sweep, the variable $lpS$ of a group $S$ will denote how long of a sequence of surviving edges can we follow until we reach $S$. For groups at the first control point, there is no path we can follow to reach them, so their variable $lpS$ will be zero.

    The sweep scans through all groups in the global graph, starting from the groups in the first control point, then those in the second control point and so on. When scanning a group $S \in G(x_i)$, we first check whether it has a strong relation with some group $S' \in G(x_{i+1})$ (recall that, by definition of strong relationship, there can be only one such group). If this happens, we set the value of variable $lpS$ of $S'$ as one higher to the value of the same variable in $S$. Thus, if we have a chain $S_1, S_2, S_3$ and $S_4$ of four groups that are strongly related to each other, the values of their $lpS$ will be $0,1,2$ and $3$, respectively. Observe that the longest surviving group can be traced by finding the group whose associated $lpS$ is largest (and increasing it by one).

    The length of the other three relations is computed in a similar way by sweeping along the groups in the global graph and transmitting the information along the paths. In the following we only describe the differences. The major change when computing the length of the traceable forward pattern is that whenever $S \in G(x_i)$ has a forward relation with some group $S' \in G(x_{i+1})$ we set the value of variable $lpF$ in $S'$ as the maximum of either its current value or the value of $lpF$ in $S$ plus one. The reason for doing so is that there can be several groups that are weakly related with $S'$. Since we are interested in the longest pattern possible, we would select the largest of the two.

    When computing the backwards relation we are interested in orientations in reverse order. As a result, we also sweep the groups in reverse order (that is, we start with the last control point and sweep towards the first control point). The values at this sweep are stored in $lpB$.

   Finally, when computing the length of related forward and backwards at the same time we can sweep in either direction (say, forward). However, in this case group $S\in G(x_i)$ must spread the information to all groups $S'\in G(x_{i})$ such that either $S\wr S'$ or $S'\wr S$. In either of the two cases, we may increase the value of $lpR$ of $S'$ to the value of the same variable of $S$ plus one (if it is a larger value).

   Figure \ref{fig:LongTerm} presents one example of teh

    \begin{figure}[!htp]
        \begin{center}
        \includegraphics[width=0.45\textwidth]{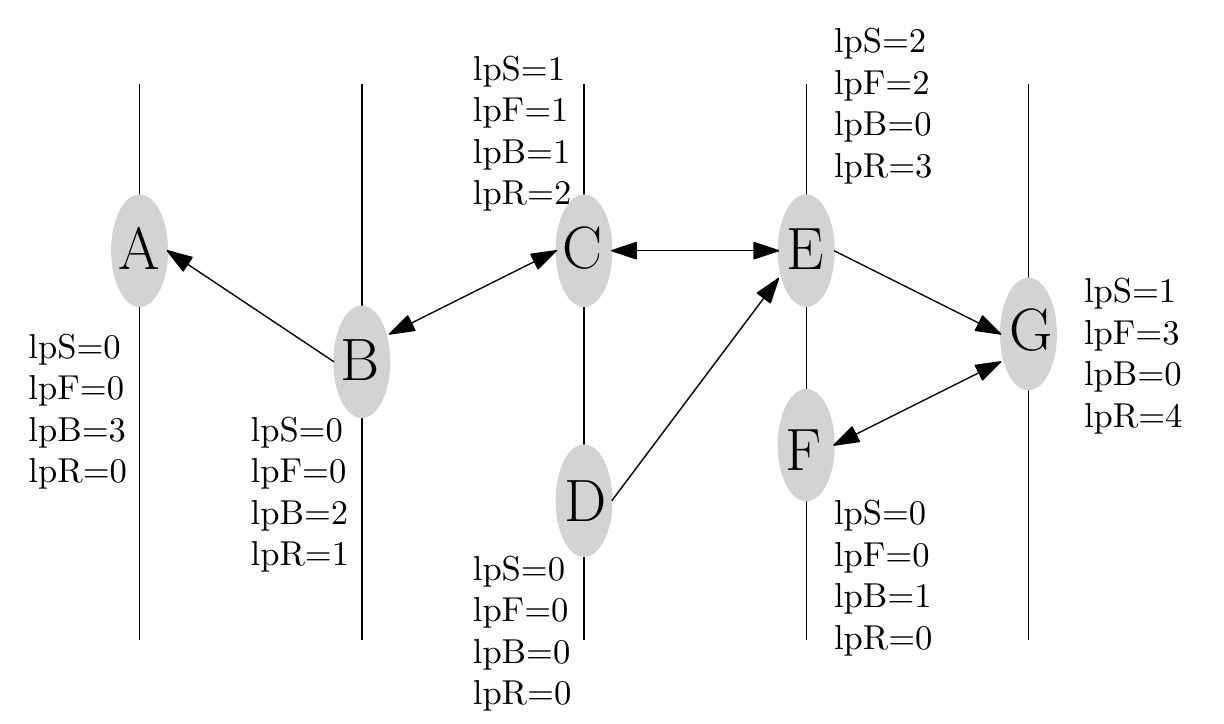}
        \end{center}
        \caption{Example of Long-term pattern computation. The long term patterns in this case are: Surviving: BCE, Traceable forward: BCEG, Traceable backwards: ECBA, Related forward or backward: ABCEG}%
        \label{fig:LongTerm}
    \end{figure}

\subsubsection{Complexity}

    When computing any of the four long term patterns, we do a single sweep of the global graph. Each time we check a group we look for its adjacencies  (forward, backwards, or both depending on the type of pattern we are looking at). Each edge of the graph generates a constant number of operations. Since the graph has all adjacencies available at constant time from each node, the overall time spent in any of the sweeps is proportional to the number of vertices and edges of the graph.

    Remember that each group must contain at least $m$ athletes and that there is one vertex in the global graph per group. Thus, we directly conclude that there are at most $n\delta/m$ vertices in the global graph. Moreover, each vertex in the global graph has at most two outgoing edges (one each to a group in the the previous and following control points), hence the number of edges is also $O(n\delta/m)$. Since both the number of vertices and edges of the global graph is linear in the number of events and the group threshold $m$ is a constant, we conclude that the long term patterns can be computed in linear ($O(n\delta/m)$) time.

\subsection{Longest behavior detection}

    Once the sweeps are finished, computing the largest length for which each long-term pattern appears is immediate. As mentioned before, we merely have to find the group with the largest value $lpS$, $lpF$, $lpB$, and $lpR$, respectively. We keep track of the largest value of each of the four traits (and a reference of the related group) with four global variables. The other groups that are part of the pattern can be obtained by following adjacencies along the graph.

            Note that, for simplicity in the description (and analysis), we described the computation of the four long term patterns separately. However, from a practical point of view it is simpler if we do it in two sweeps: a forward sweep in which $lpS$, $lpF$ and $lpR$ variables are set, and a backwards one to compute $lpB$. Modifications needed to do so are straightforward.

In this paper we focused on the technical difficulties of extracting group information. A possible future work direction would be to explore good methods of displaying this information once it has been obtained. Cognitive experts agree that graphs drawn without crossings help ease the understanding of the information being displayed~\cite{DBLP:journals/vlc/HuangEH14}. Because of the nature of the evolution and the global graphs, it is very likely that the graphs we defined are planar (that is, that we can draw them in a paper without any edges crossing with other edges). In some rare cases, the interaction between groups may force crossings, but even if this happens the graphs will remain {\it quasi-planar}. There is a large amount of literature on how to characterize and visualize graphs that fit in this category and how to display them in a way that the few crossings do not deter the understanding of the information~\cite{JGAA-459,shonan,hong_et_al,DBLP:conf/ictcs/Liotta14}. Because we have edges fanning out of a vertex (in case a group splits or disbands) or merging in (in the reverse operations), the family of {\it k-gap-planar-graphs} seems like a perfect fit. We refer the interested reader to~\cite{gap} for more information on this kind of graphs.

\section{Experimental results} \label{experiments}

    In this section, we provide experiments to study in depth the algorithms presented. All algorithms were implemented in C++ and the experiments where run using one single 2.4GHz processor with a  and 128G of RAM memory running under a Linux Ubuntu operating system. The computer used had multiprocessor capacity (12 dual core processors) but the code was written to be run in a single processor and no concurrent executions (several processes using the same code) where not used. In order to measure the time needed by each part of the algorithm we used the {\it Callgrind} profiling tool of the Valgrind instrumentation framework \cite{Valgrind}.

    Concerning the data used, Even though a large number of runners (say, 30,000) existing real data is sparse (typically only 12 control points are placed along the length of a full marathon). Consequently, in addition to considering real data we also include experiments with denser synthetic datasets to provide an indication of how the algorithms presented could be used in the near future when denser real data becomes widely available. Specifically, in Section \ref{syntheticEXP} we present experiments with synthetic data, which allows us to have absolute control over the conditions of the experiment and to test our implementation in very demanding conditions. At the same time, we can test the correctness of the implementation. In Section \ref{realEXP} we test and experimentally analyze the algorithms with the denser real data we have been able to obtain. There we analyze the impact of the parameters $m$, $\mu$ and $\varepsilon$ in the detected patterns. In both experiments we analyzed the obtained results, provide the running time of the whole algorithm and compute the in-average percentage of time needed by each step of the algorithm. The obtained results corroborate the theoretical complexity analysis and show how our algorithms can be used to extract meaningful information from the data.

    Additionally, we are able to maintain for every athlete: 1) their position in the race, 2) their current pace (i.e. minutes needed to run a km) and average pace over the race, 3) the groups they belonged to at every past control point and all relative group patterns. All the information is accurate up to the last control point the athlete went through. Since during the data processing the information associated to an athlete is accurate up to the last control point the athlete went through, many other interesting information immediately follows and can be easily reported. For race directors, it is possible to determine when an athlete skips a control point or detect significant changes in pace (indicating possible cheating, problems with equipment, or even athlete exertion problems). As our algorithms provide information on currently existing groups and their behavior, more complex issues can be detected and even anticipated. An example of this would be for traffic problems that might occur when a large number of runners approaches a section of the course with lower capacity (as happens, for example, in races that use a single street in both directions and have a point where all runners need to turn 180 degrees). On the other hand, every runner could easily be informed, via a phone or smartwatch app, not only on their current race standing, but also on the pace of nearby competitors compared to their own (averaged in terms of their group) or a comparison with other groups. It would, therefore, be possible to know whether the pace of the runner is increasing/decreasing compared to the average of runners in nearby groups and even to obtain a projection of the number of runners that might overtake/be overtaken by the runner. While it should be acknowledged that there is a potential for an excess of information, because, after all, for most people races are more about recreation than competition; this information is clearly of interest for sports broadcasts or for the more competitive runners and their coaching staff.

\subsection{Experiments with synthetic data} \label{syntheticEXP}

   In order to test the performance of our algorithms in a demanding scenario, while at the same time being able to test the correctness of our implementation we built several synthetic data sets in which we could control and exactly know the number of occurrences of each pattern. The basic unit that we used for the test was the group of runners. Essentially, each group was created around a fixed pace. This was inspired by the groups of pacesetters used in real races. In this basic behavior, runners randomly deviate from their pace slightly, but never lose $\varepsilon$-connectedness. Whenever this {\it constant running behavior} is kept between two control points, a {\it Survives} evolution pattern is created. If this goes on for more control points, a surviving long-term pattern is also produced.

    To simulate the rest of the evolution and behavior patterns, the following events were designed. Thus, a group (or part of it) at any given control point can remain either unchanged or:

    \begin{itemize}
        \item {\it Divide}: A group at a given control point gets broken down into several sub-parts that are not $\varepsilon$-connected. The number of subsets created was adjusted to fit the needs of the behavior that was being simulated.

        \item {\it Explode}: In this case, the entities involved get separated $\varepsilon+1$ seconds from one another so the group or part of a group they belonged to no longer exists at the control point.
    \end{itemize}

    By combining the {\it constant running}, {\it dividing} and {\it exploding} behavior, all the group evolution patterns described throughout this paper can be simulated. For example, a group dividing into two or more groups produced a {\it Splits} evolution pattern (and a later {\it Merges} evolution pattern if it went back to {\it constant running}). A group breaking up into two parts, one of which explodes, produces a {\it Shrinks} evolution pattern, or a group that has divided into seven parts, five of which explode, and then goes back to {\it constant running} produces a {\it Coheres} evolution pattern.

    For this experiment we set $m$, the group threshold, to be 7. To build our synthetic test scenarios we used aggregations of 25 runners which we will refer to as a {\it pack} (not to be confused with the concept of {\it group}). We then randomly assigned a pace to each pack. Ten paces were considered in the first test bench and 50 in the second. Consequently, many packs coexisted within the same pace. Each of the packs was also randomly assigned a behavior at each control point. Consequently, the packs at each control point formed groups and the entities that were shared by the groups from one control point to the next could be determined by checking the packs that had been assigned the same pace.

    This construction allowed us to predict (and check) the number of group evolution patterns of each type that would happen, ensuring the correctness of our implementation. Similarly, it was also possible for us to predict the maximum length of all behavior types. Additionally, we could also produce scenarios with complex group interactions that illustrate the computation capacities of our algorithm.

    Table \ref{taulaSintetics} shows the results of the experiment considered over a marathon distance course. The table includes the total number of athletes considered, the control points used, the total time taken by the algorithm, the average time needed to process an event (each event is an athlete running through a control point) and the number of times some of the evolution patterns described were identified. As an artifact of our experiment, some of the evolution patterns appeared paired. For example, {\it Appears} and {\it Disappears}, {\it Merges} and {\it Splits}, {\it Expands} and {\it Shrinks} or {\it Disbands} and {\it Coheres} presented similar numbers of occurrences. Consequently, and for the sake of brevity, we only present one evolution pattern within each pair.

    \begin{table*}[t!]
         \caption{Details of the synthetic experiment. At the top of the table the athletes were given one of 10 possible paces, while at the bottom up to 50 different paces are considered. All times are given in seconds. Recall that $lpS$ stands for longest surviving path (see Section~\ref{problems}).}
         \centering
         \resizebox{\textwidth}{!}{
         \begin{tabular}{|c|c|c|c|c|c|c|c|c|c|c|}
         	\hline
            $|athletes|$	& $|control points|$ &	control point dist. &	total time & time/event & $lpS$	& $|Survives|$ &	 $|Disappears|$ & $|Expands|$ &	$|Splits|$	&	$|Coheres|$	\\
         	\hline
            2500000 & 100 & 421.95 & 242 & $1\times 10^{-8}$ & 100 & 4946 & 1899 & 0 & 4 & 0 \\
            250000 & 500 & 84.39 & 156 & $6.24\times 10^{-8}$ & 500 & 24946 & 9257 & 12 & 4 & 2 \\
            12500 & 1000 & 42.195 & 15 & $1.2\times 10^{-6}$ & 952 & 49042 & 1817 & 579 & 67 & 704 \\
            125000 & 5000 & 8.439 & 4414 & $7.06 \times 10^{-6}$ & 5000 & 250000 & 53814 & 1612 & 4 & 0 \\
            25000 & 10000 & 4.2195 & 3070 & $1.23\times 10^{-5}$ & 10000 & 505372 & 35777 & 7893 & 62 & 857 \\
             	\hline
            2500000 & 100 & 421.95 & 252 & $1\times 10^{-8}$ & 1 & 0 & 19361 & 23 & 934 & 0 \\
            125000 & 500 & 84.39 & 75 & $1.2 \times 10^{-6}$ & 3 & 34 & 98200 & 154 & 4740 & 0 \\
            250000 & 1000 & 42.195 & 463 & $1.85\times 10^{-5}$ & 2 & 3 & 199724 & 266 & 9687 & 0 \\
            12500 & 5000 & 8.439 & 373 & $5.97 \times 10^{-5}$ & 9 & 6341 & 683871 & 4100 & 10310 & 25482 \\
            25000 & 10000 & 4.2195 & 3726 & $1.49\times 10^{-4}$ & 6 & 8091 & 1725033 & 5770 & 72949 & 8863 \\
        	\hline
         \end{tabular}\label{taulaSintetics}}
     \end{table*}

    In the second scenario (with a larger range of different paces), we test our algorithms in much more demanding conditions. Recall that currently data available for races are sparse (for example, to date only 12 control points are used in the Boston Marathon). A typical running GPS enabled running watch might measure the position 200 times over the course of a marathon, whereas the most precise experimental GPS~\cite{WW04} takes measurements every second. Using the speed of the current official world record for a marathon~\cite{Marathon} (5.72 $m/s$, or roughly 2 hours to complete the race) as a reference, we decided to set control point distances to roughly range from 5 to 500 meters. Similarly, the most popular running races nowadays range from about 30 000 runners for the most popular marathon to around 200 000 runners in some shorter distance events. Consequently, we set our test to go from 12 500 runners to a (likely unfeasible) 2 500 000 "synthetic" athletes.

    The table shows how, even in the slowest of cases, all the computations involved took about an hour and a half, which is less than the time even the faster runners take to complete a race. Furthermore, taking into account the total time needed to process an event, we see how our algorithms run in real-time even in the most demanding conditions.

    Concerning the two scenarios considered in terms of evolution pattern complexity (10 or 50 pace possibilities), scenarios with 10 pace bands allowed for less separation between the different packs and produced more compact groups with a higher presence of the {\it Expands/Shrinks, Appears/Disappears} and, (mostly) {\it Survives} evolution patterns (as shown in the top half of the table, columns 7-9). The pre-eminence of these patterns that describe at most one to one (weak or strong) relation between groups is also reflected in the fact that, for all but one of the examples presented, there are groups  that survive together for the whole race (long-term pattern depicted in the sixth column).

    Conversely, when 50 paces were considered, there were less coincidence between the divided or exploded parts of the different packs and, thus, evolution patterns that consider weak relations between one group at one control point and at least two groups at the other were more frequently observed. This is depicted best by the final two columns in the bottom half of the table. The previous asymptotic cost analysis notwithstanding, this type of behavior requires the algorithms to go over more detailed case distinctions when evaluating the evolution pattern of each group and, thus, these scenarios produce the slower times per event. Specifically, the $1.49\times 10^{-4}$ seconds/event value shown at the bottommost cell in the fifth column is the most illustrative example. Nevertheless, this value shows how even in this tailor-made, extremely demanding situation, our algorithms can process events in the order of $10^{4}$ per second, which we feel demonstrates their potential use in practical applications.

\paragraph{Runtime analysis}

We provide a detailed account on the runtime performance of our algorithms for the case of synthetic data depicted in Figure \ref{tab:RuntimeSint}. For each execution we compute the percentage of time needed at each stage and then computed the average over all the percentages of the different executions Furthermore, with the purpose of simulating the data available in real-life situations, we generated events and sorted them so they would be passed to the algorithms in the order they would occur in race time. While the time needed by these operations was part of the execution of the algorithms (and, as such, is included in the values reported in table \ref{taulaSintetics}), we feel that these costs are not really part of the discussion on algorithm performance. In order to both tell the whole story about how the algorithms perform while making the discussion on algorithm performance as clear as possible, we first (figure \ref{tab:RuntimeSint}, left) provide times for high level steps of the program execution (including sorting) and then provide details only for the parts that we consider to properly illustrate the performance of the algorithms presented in this paper (figure \ref{tab:RuntimeSint}, right).

\begin{figure}[h]
    \begin{center}
    \includegraphics[width=0.35\textwidth]{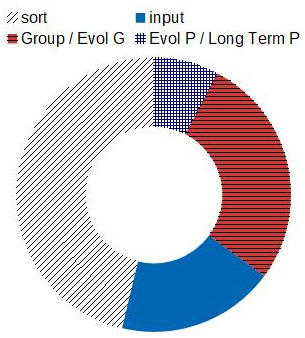} \qquad
    \includegraphics[width=0.45\textwidth]{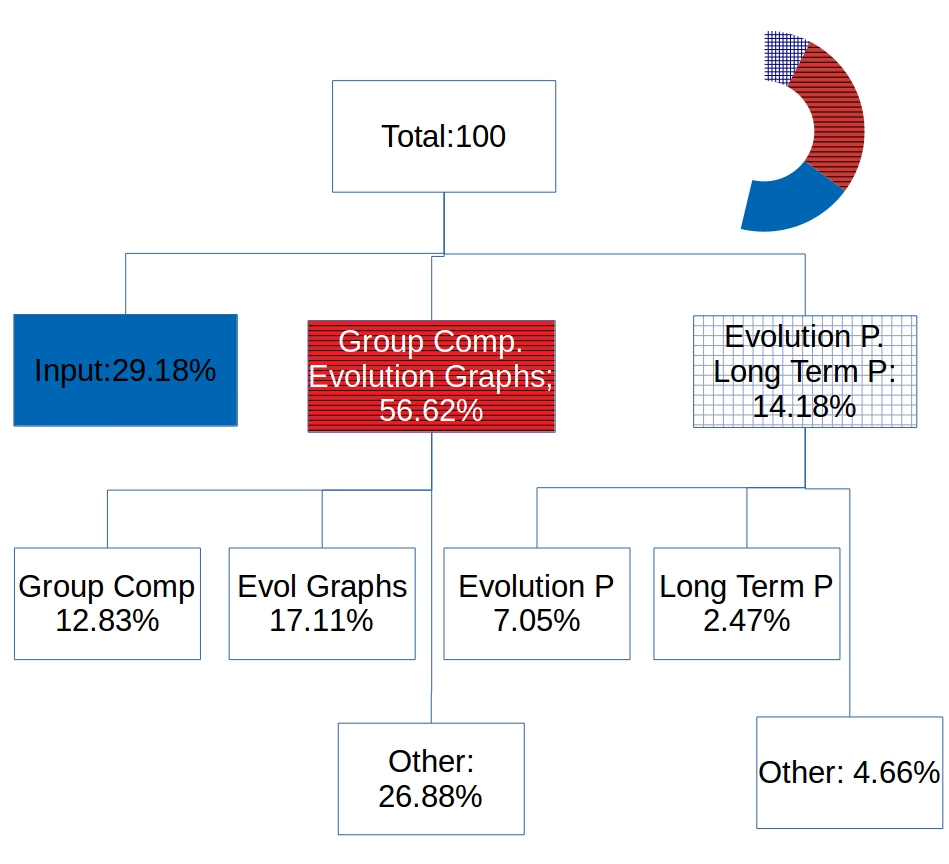} \\
    \end{center}
    \caption{Runtime Analysis, Synthetic data. Average percentage of runtime over all execution are presented. Left, times including sorting data, Right, detailed costs of the running times excluding data sorting. }
    \label{tab:RuntimeSint}
\end{figure}

Thus, we divide the execution of the algorithm in 4 large parts. First the generation and sorting of the data, which took roughly 46\% of the computation time on average. The second part corresponds to the reading of the generated data, parsing through a text file and storing the information in the variables that our algorithms uses. In this case, these variables encapsulated the events that happened every time a runner run through one control point. This took roughly 19\% of the execution time. Concerning the algorithms described throughout the paper, the part described in Algorithm~1 needed 27\% of the total execution time. This included the processing of the events in order to determine the groups at each control points and also establish the evolution graphs that allowed for group relation computation. Finally both evolution and long term patterns were computed from the evolution graphs taking a further 8\% of the time. This part of the algorithm was detailed in Algorithm~2. Note that the sorting stage is not considered in our
theoretical complexity analysis because it is only needed to the fact that we do not obtain data in real time. The theoretical complexity of the sorting stage is $O(n\delta \log n\delta)$ which is larger than the complexity of any of the remaining parts of the algorithm. This corroborates the percentages presented in figure \ref{tab:RuntimeSint}.

Actually, our theoretical analysis corresponds to the right part of figure \ref{tab:RuntimeSint} presents the average percentage of execution time once the sorting step has been excluded. Of the roughly 56\% of the time dedicated to input, event processing and pattern determination, Input needed 29.18\%, the processing of events in order to determine groups at each control point and compute the evolution graphs took 56.2\% and the determination of evolution and long term patterns took 14.18\% of the time on average. The large amount of the time needed for data input shows how efficient our algorithms are. Determining groups and computing the graphs that essentially encapsulate all the information needed to compute the evolution and long-term patterns roughly needs double the time needed to read the input data from plain text files. Within this 56.2\% of execution time excluding sorting, the determination of groups in each control point needs slightly less time (12.83\%) than the computation of evolution graphs (17.11\%). It is important to notice that the largest portion of time needed in this step was used for purposes other that the actual algorithmic computations. These included the management of the data structures used (reserving and freeing memory, pointer set-up) as well as calls to function and storage of results.

Detecting evolution and long term patterns took 14.18\% of the time. The fact that this is less than the time needed to read the input is another indicator of algorithm efficiency. This was also to be expected as these computations borrow heavily from the information computed in the previous step. Precisely, the computation of evolution patterns took about half the time needed for this step (7.05\% of the time excluding sorting) and the computation of the long-term patterns ($lpS$, $lpF$, $lpB$ and $lpR$) needed about one third of that at 2.47\%.

\subsection{Experiments with real data}\label{realEXP}
In this section, we present experiments with real data. In this case the running times of our algorithms are really small, never exceeding 2 seconds. We pay attention to the role and impact of each parameter ($m$, $\mu$ and $\varepsilon$) in the obtained results and also analyze whether their influence in the detected patterns is consistent with our theoretical analysis. As before we also analyze the time spent in each part of the algorithm providing the runtime analysis.

    We have used the freely available data~\url{https://github.com/llimllib/bostonmarathon} from the Boston Marathon (\url{http://www.baa.org/}), corresponding to 2013 and 2014 with information from 16 056 and 31 984 runners, respectively. In each case, for every runner we have information on the time (in hours, minutes and seconds) they cross each of the 12 control points set up throughout the marathon. While it must be noted that this data is relatively sparse, it is also the best that can be sourced nowadays. During our research we did find some small sets of slightly denser data (about 180 control points during a marathon) but at the price of having very few runners. To focus on the group evolution patterns introduced, we decided to use the data corresponding to the race with the largest number of runners.  We ran more than 700 experiments with different values of the $\varepsilon$, $m$, $\mu$ parameters. In all cases, computations took less than two seconds (with an average ratio of approximately 180,000 events/s). These experiments were carried out with two main goals in mind: 1) to demonstrate how our algorithm behaved with real data, and 2) to show how the three parameters mentioned conditioned the output, thus allowing for the user to customize the behavior of the algorithm.

    Concerning the influence of the parameters, special focus was put on the $\varepsilon$ parameter. Figure \ref{tab:RealData} contains information about this experiment. The columns in the figure correspond to the two editions of the Boston marathon (2013 and 2014). The top row presents data on the number of occurrences for most group behaviors as well as LS values for $\varepsilon$ values ranging from 0 to 10. The bottom row attempts to provide an overview of the effect parameter $\varepsilon$ has on the number of groups by presenting the number of occurrences of {\it appears/disappears} from $\varepsilon$ ranging from 0 to 100. Notice that, since the precision of our measurement is in the level of seconds, the value $\varepsilon=0$ will group athletes that pass through a control point during the same clock second.

\begin{figure}[h]
    \begin{center}
    \includegraphics[width=0.4\textwidth]{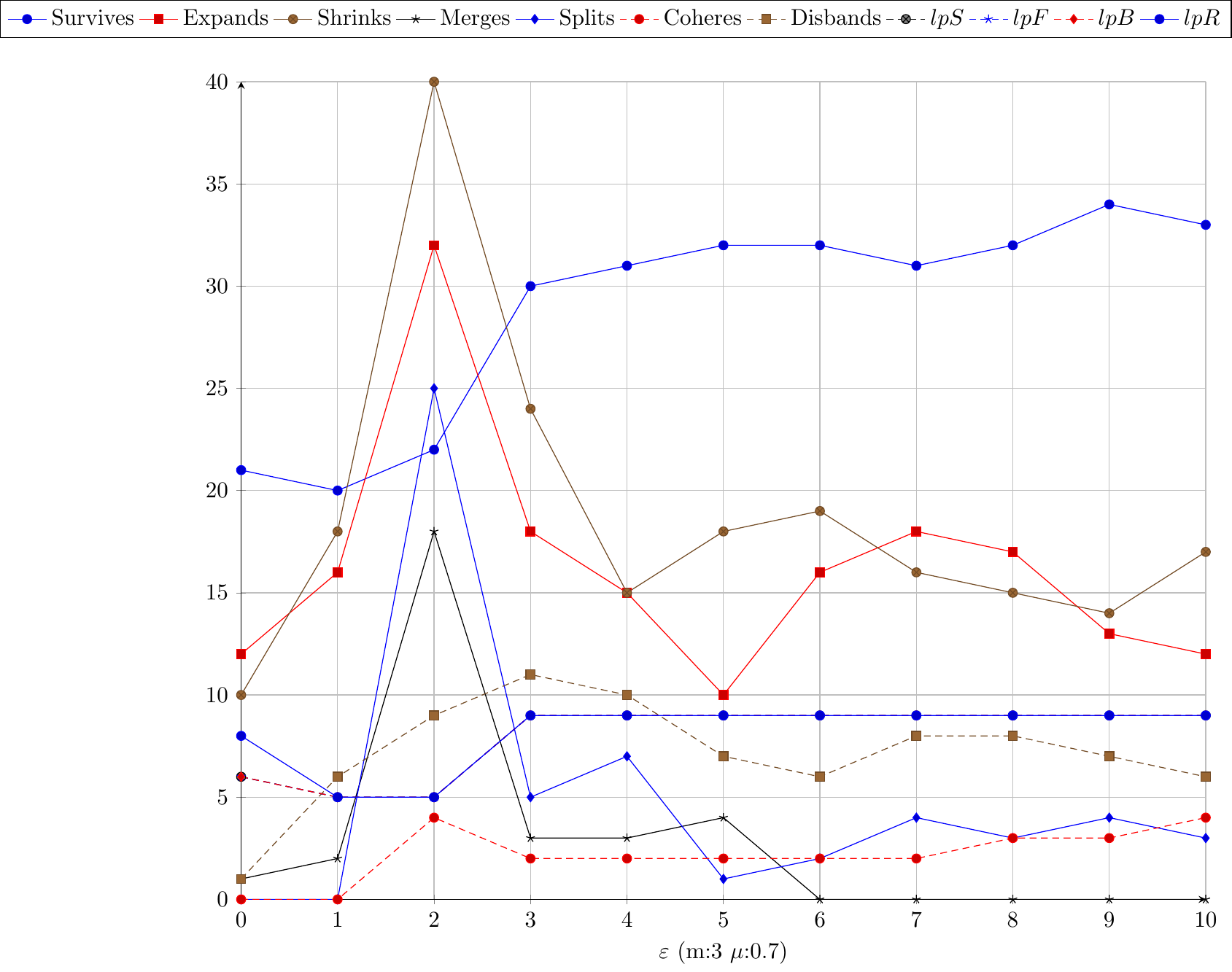} \qquad
    \includegraphics[width=0.4\textwidth]{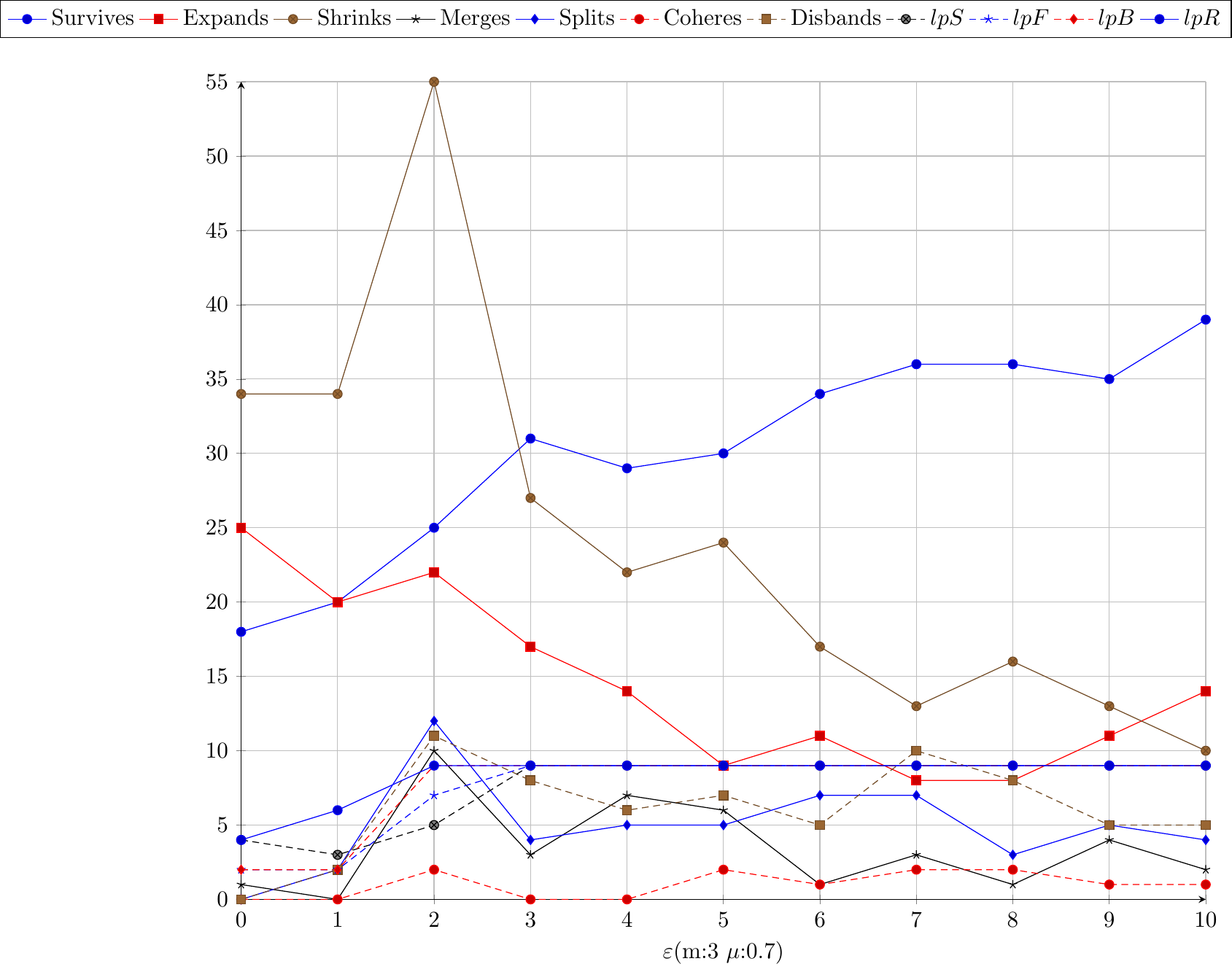} \\
\; \qquad    \includegraphics[width=0.34\textwidth]{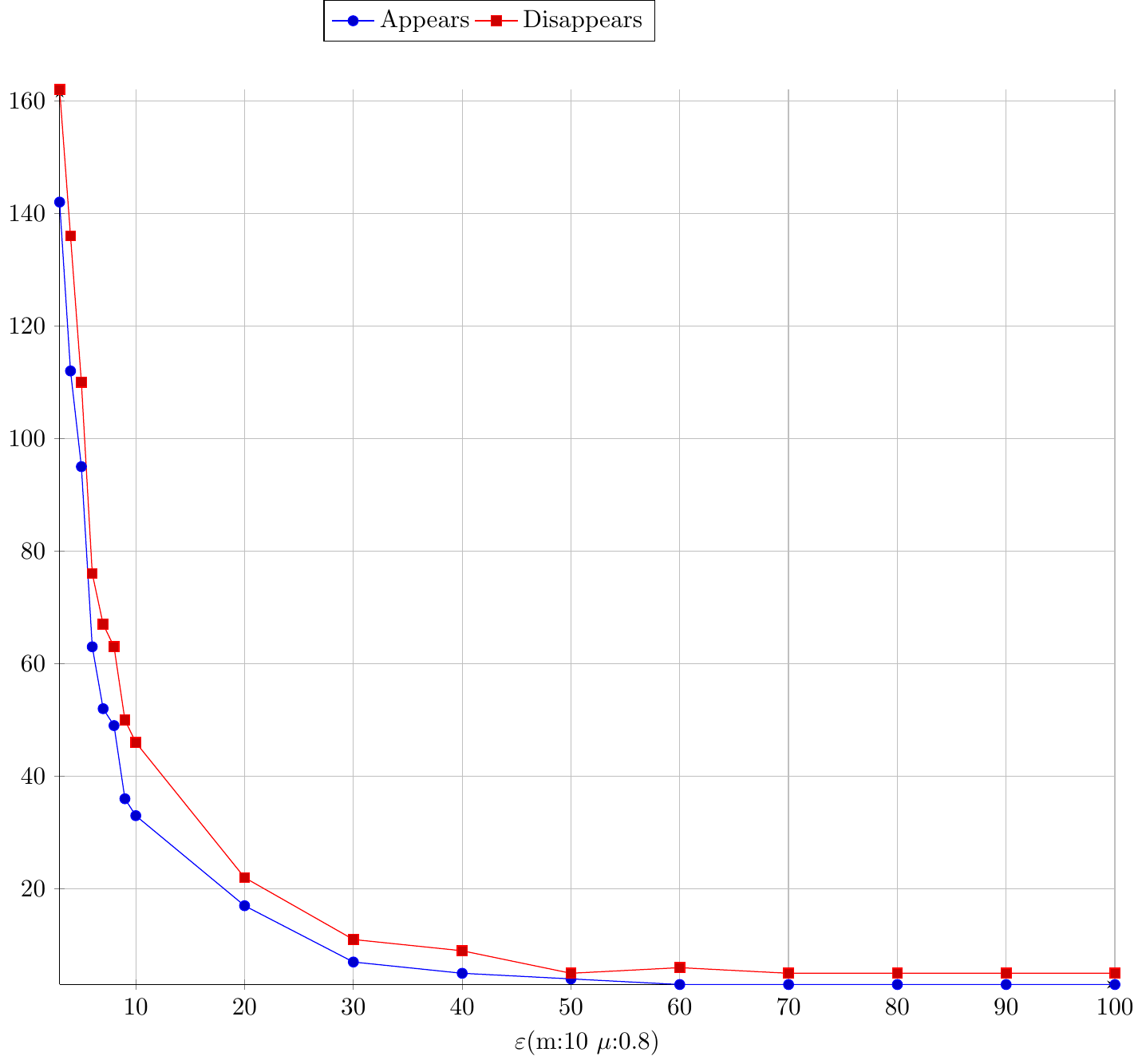} \qquad \qquad \; \;
    \includegraphics[width=0.34\textwidth]{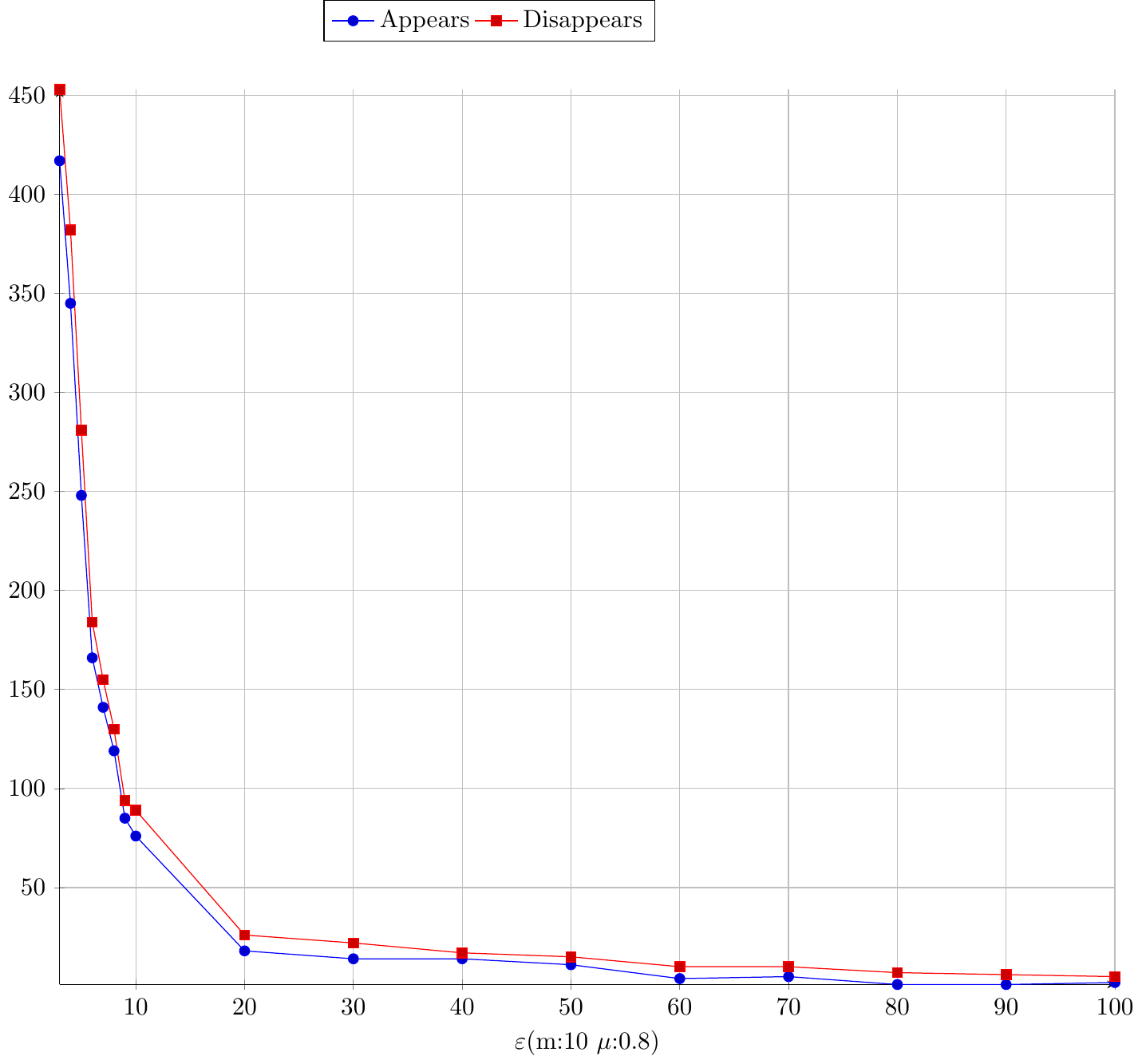}
    \end{center}
    \vspace{-1.5em} \caption{Result summary from the real data experiment. Figures in the left column contain data from the 2013 Boston Marathon and those in the right columns from 2014. Figures in the top row contain information on the number of occurrences of most group behaviors (also including $lpS$, $lpF$, $lpB$, and $lpR$) for $\varepsilon$ values from 0 to 10. Figures in the bottom row present information regarding the {\it Appears} and {\it Disappears} patterns for a longer range of values of $\varepsilon$ (from 0 to 100). }
    \label{tab:RealData}
    \vspace{-1em}
\end{figure}

The graphs in Figure \ref{tab:RealData} (bottom row), show how larger values of epsilon result in a smaller number of groups being detected. This is expected, since we allowed a longer delay between athletes to belong to the same group. In the most extreme values of $\varepsilon$ (when we allow a delay of 30 seconds between athletes in the same group) most of the race consists of a single group. Such a large value of $\varepsilon$ does not capture the natural intuition of a group of runners. If we consider more interesting smaller values of $\varepsilon$, we observe that the number of groups disappearing is larger than those appearing. This implies that the {\it Splits} evolution pattern dominated over the {\it Merges}.

The fact that we observe more splits than merges is coherent with intuition. For race organization purposes, runners are divided into several waves of runners depending on their previous running paces. Runners, thus, start grouped in several very large groups. Each of these waves will then split into smaller groups during the race. This pattern should hold for much of the race. Towards the end, the groups should stabilize as runners find their pace and make the way to the end of the race in much more resilient groups. %

As expected, as the value of $\epsilon$ decreases, more and more groups are observed. The maximum number of {\it Appears/Disappears} evolution patterns reported happen for $\epsilon=0$ (recall that due to errors in precision, two athletes will belong to the same group only when their difference in time is less than a second). With the other evolutions and long-term patterns studied in this paper (top row of Figure \ref{tab:RealData}), larger values of $\varepsilon$ produce larger numbers of {\it Survives} patterns and, subsequently, smaller occurrences of the other types of values (this is also expected, since low values of $\epsilon$ means that groups contain fewer runners, and thus it is harder find relationships between the different groups). The number of different evolution patterns observed peaks at $\varepsilon=2$, striking a balance between the different evolution patterns studied. From then on, the {\it Survives} dominates the rest and the maximum lengths of long-term patterns reach their maximums.

The figure also shows how the characterization of the evolution patterns described is greatly affected by the choice of $\varepsilon$. In televised cycle races it is common to use extremely low tolerance (say, a few milliseconds). However, if we are tracking runners in a marathon race, gaps of 3 seconds are not considered to be significant, whereas larger advantages (say, 5 seconds) could consider the runner to be out of the group. It remains for the user to determine exactly which values better capture the intuition of when two athletes are to be considered to be connected in each particular application.

To sum up, the evolution patterns presented in this paper not only concur with this intuition, but also provide specific data on that helps quantify and interpret it. For instance, the majority of {\it Splits} (a little over 72\% for a typical example shown in the figure with $\varepsilon=1$) happen in the first half of the race. This is also visible by examining the number of runners in the largest group at each control point. The number is expected to decrease, but our algorithm can determine by how many. In the previous example, the largest group starts with 611 runners and then decreases to 324, 229, 192, 137, 99, 89, 109, and 78 at each of the control points. Hence, apart form detecting patterns, the obtained results allow to make in deep analysis of the races and their evolution.

\paragraph{Runtime analysis}

Regarding the runtime performance of our algorithms for the case of real data, we have followed the same approach as for synthetic data. In this case we did not need to spend time generating data but we still needed to sort the events according to the time they happened at. This resulted in the relative weight of this part dropping from the 46\% of the synthetic case to roughly 35\%. Consequently, the other parts of the algorithm rose to about 21\% for the input of data, approximately 23\% for group determination and evolution graph computation and about 20\% for calculations related to Evolution and Long term pattern determination, figure \ref{tab:RuntimeReal}, left. Once rescaled to not include the sorting of events, the input part needed 32.68\% of the average time, the determination of groups and computational of evolution graphs took 35.67\% and the determination of evolution and long-term patterns needed 31.66\% of the average time, figure \ref{tab:RuntimeReal}, right.

\begin{figure}[h]
    \begin{center}
    \includegraphics[width=0.35\textwidth]{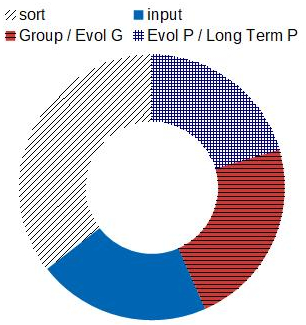} \qquad
    \includegraphics[width=0.45\textwidth]{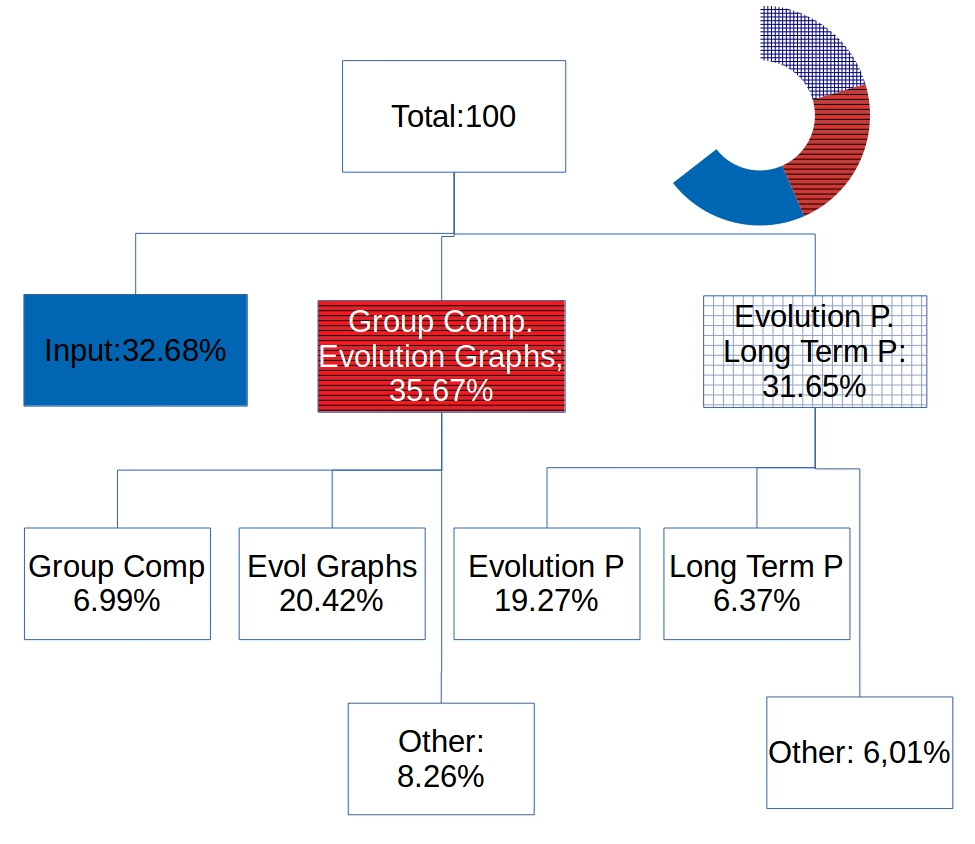} \\
    \end{center}
    \caption{Runtime Analysis, Real data. Average percentage of runtime over all execution are presented. Left, times including sorting data, Right, detailed costs of the running times excluding data sorting.}
    \label{tab:RuntimeReal}
\end{figure}

The indicators of the efficiency of our algorithms observed for synthetic data carry to the case of real data too. In this case this fact is even more evident as the time to read the data amounts to basically the same time needed to either determine groups and compute evolution graphs or that of determining patterns (both evolution and long term). A major difference to be observed respect to the synthetic case is the apparent increase in importance of the calculation of patterns respect to the computation of evolution graphs. This is likely to be caused by the data sizes. Precisely, The group determination is now 6.99\%, which is added to the 20.42\% of evolution graph computations and the 8.26\% of "other" computations for a total of 35.67\%. This is a noticeable decrease from
the 56.62\% of the synthetic case, and is mostly due to the almost 20\% decrease in the time needed by the "other" part that accounts for memory and variable management. This is very likely happening because the size of data involved in the real examples is much smaller than in the synthetic ones. Specifically, the approximately 16000 runners in the Boston Marathon are less than the number of runners in all but one of the synthetic experiments and are very far from the maximum values of two and a half million "virtual" runners. A similar effect is likely present due to the reduced number of control points. This indicates that there is probably room for improvement in the code in order to search and update the control points both as containers of runners and as parts of the race history of every particular runner. In the current code both entities are implemented using linear structures (lists) and could benefit from more efficient structures such as Hash tables. However, as the current implementation can already process events much faster than they would be produced, whether or not it needs to be optimized remains as an issue to be decided during an eventual process to port the code to real-time use. Finally, the computation of patterns gains importance as a consequence of these issues but maintains a similar distribution among its parts. The 31.65\% time it needs is distributed in 19.27\% which is again about triple of the 6.37\% of time needed for the computation of long-term patterns.

\section{Conclusions} \label{conclusions}

In this paper we presented what, to the best of our knowledge, is the first analysis of group evolution patterns that works specifically in running races or similar one-dimensional settings in which a set of entities move along a predefined course. The definitions and algorithms presented here can provide meaningful insights not only for runners, but also for race organizers and spectators or other research areas. For instances, it could be used, in the future, in crowd simulation \cite{Thal16} to simulate the movement of a large number of runners in a race with a predefined course in video games or films so that the virtual runners behave realistically. Since we provide fast algorithms, they can be run either in real-time during a race or as an analysis tool once the race is over which makes that they have even more potential applications.

First, we defined the notion of a group of athletes crossing a control point. Then, we studied the relationships between groups at two consecutive control points which, in turn, allowed us to define several evolution patterns: survives, appears, disappears, expands, shrinks, merges, splits, coheres and disbands.  We also analyzed the evolution of groups along several consecutive control points to detect long-term patterns, such as: surviving, traceable forward, traceable backward and related forward and backward, determining, every time that they appeared, the durability of each pattern. Finally, we recorded the longest appearance of each of the long-term patterns during the race.

To illustrate the efficiency of our algorithms, in Section \ref{realEXP} we have shown how currently available data (from the 2014 Boston marathon with roughly 360 000 events) can be processed in less than two seconds. Finally, we have shown how the algorithms presented in this paper are already able to process denser data (as will happen if the amount of data available continues to increase at its current rate). Specifically, Section \ref{syntheticEXP} shows how, even in cases of extremely dense data (see for example the last row of Table 1), our algorithms are still able to process more than $10^{-4}$ events/second  and thus, run in real-time. This shows their potential to be integrated into practical applications without the need for expensive equipment.

A natural continuation of this work would be to investigate the possible implementation improvements described in section \ref{realEXP}. Additionally, other situations in which our approach can be used will also be considered. Our algorithms and definitions rely heavily on the 1-D of the problem, so it would be hard to extend the same approach to arbitrarily high dimensions. However, we believe that they can be used when the data lies somewhere between one and two dimensions (e.g. the movement of people at designated metro stations, or social networks etc.).

\section{Acknowledgements}

Y. Diez is supported by the ImPACT Tough Robotics Challenge project through the Council for Science and Technology Agency, Japan. M. Fort and J. A. Sellar\` {e}s are partially funded by the MPCUdG2016-031 from the UdG. M. Korman was supported by  MEXT KAKENHI No.~17K12635 and the NSF award CCF-1422311.

\bibliographystyle{amsplain}

\end{document}